\documentclass[journal,twoside]{IEEEtran}
\usepackage{amssymb}
\usepackage{amsmath}
\usepackage{amsthm}
\usepackage{graphicx}
\usepackage{algorithm}
\usepackage{algorithmic}
\usepackage{cite}
\usepackage{bm}
\usepackage{float}
\usepackage{multirow}
\usepackage{color}
\usepackage{subfigure}
\usepackage{epstopdf}
\usepackage{hyperref}
\usepackage{pgfplots}
\usepackage{pgfplotstable}
\usepackage{enumitem}

\begin{document}

\title{Economics of Semantic Communication System: An Auction Approach}
%
\author{Zi Qin Liew, Hongyang Du, Wei Yang Bryan Lim, Zehui Xiong, Dusit Niyato, \IEEEmembership{Fellow, IEEE}, Chunyan~Miao,~\IEEEmembership{Senior~Member,~IEEE}, and Dong In Kim,~\IEEEmembership{Fellow, IEEE} 

\thanks{Zi Qin Liew and Wei Yang Bryan Lim are with Alibaba-NTU Singapore Joint Research Institute, Nanyang Technological Univerity, Singapore (e-mail: ziqin001@e.ntu.edu.sg, limw0201@e.ntu.edu.sg).}
\thanks{H.~Du is with the School of Computer Science and Engineering, the Energy Research Institute @ NTU, Interdisciplinary Graduate Program, Nanyang Technological University, Singapore (e-mail: hongyang001@e.ntu.edu.sg).}
\thanks{Z. Xiong is with the Pillar of Information Systems Technology and Design, Singapore University of Technology and Design, Singapore (e-mail: zehui\_xiong@sutd.edu.sg)}
\thanks{Dusit Niyato is with the School of Computer Science and Engineering, Nanyang Technological University, Singapore (e-mail: dniyato@ntu.edu.sg).} 
\thanks{Chunyan Miao is with the School of Computer Science and Engineering, Nanyang Technological University, Singapore 639798, and also with the Joint NTU-UBC Research Centre of Excellence in Active Living for the Elderly (LILY), Nanyang Technological University, Singapore 639798 (e-mail: ascymiao@ntu.edu.sg).}
\thanks{Dong In Kim is with the Department of Electrical and Computer Engineering, Sungkyunkwan University, Suwon 16419, South Korea (email:dikim@skku.ac.kr).}}


\maketitle

\begin{abstract}
Semantic communication technologies enable wireless edge devices to communicate effectively by transmitting semantic meaning of data. Edge components, such as vehicles in next-generation intelligent transport systems, use well-trained semantic models to encode and decode semantic information extracted from raw and sensor data.
 However, the limitation in computing resources makes it difficult to support the training process of accurate semantic models on edge devices. As such, edge devices can buy the pretrained semantic models from semantic model providers, which is called ``semantic model trading". Upon collecting semantic information with the semantic models, the edge devices can then sell the extracted semantic information, e.g., information about urban road conditions or traffic signs, to the interested buyers for profit, which is called ``semantic information trading". To facilitate both types of the trades, effective incentive mechanisms should be designed. Thus, in this paper, we propose a hierarchical trading system to support both semantic model trading and semantic information trading jointly. The proposed incentive mechanism helps to maximize the revenue of semantic model providers in the semantic model trading, and effectively incentivizes model providers to participate in the development of semantic communication systems. For semantic information trading, our designed auction approach can support the trading between multiple semantic information sellers and buyers, while ensuring individual rationality, incentive compatibility, and budget balance, and moreover, allowing them achieve higher utilities than the baseline method.

\end{abstract}
\begin{IEEEkeywords}
Semantic communication, incentive mechanism, auction
\end{IEEEkeywords}

\section{Introduction}
\label{sec:intro}
With the advancement of sixth-generation (6G) mobile communication technology, data transmission rate in the conventional communication systems is increasing but approaching the Shannon limit. Meanwhile, the remaining available spectrum resources are becoming increasingly scarce. To solve this dilemma, semantic communication technologies are proposed~\cite{xie2020lite}, which aims to transmit the extracted semantic information relevant to the communications goal. Because the data amount that needs to be transmitted can be reduced significantly while ensuring the effectiveness of communications~\cite{xie2021deep}, semantic communications can be widely used in intelligent wireless networks, to enable smart transportation \cite{lin2020edge}, smart logistic \cite{song2020applications}, smart cities \cite{an2019toward}, smart homes \cite{sezer2015development}, and smart healthcare \cite{thangaraj2016agent}.



Existing semantic communication systems \cite{xie2021deep}, \cite{weng2021semantic} are pretrained with labeled datasets with certain channel models. However, a main drawback is that the accuracy and performance of a pretrained semantic model decrease when the background knowledge or communication environment changes, i.e., mismatch between the knowledge base/channel model used in the training and the actual knowledge base/channel model. To reduce the gap in performance, fine-tuning of the model parameters can be done based on the real channel models \cite{dorner2017deep} and new background knowledge \cite{xie2021deep}. However, edge and Internet of Things (IoT) devices with limited computation power might not have enough resources for fine-tuning. Moreover, the results of fine-tuning depend highly on the amount of labeled data of the new knowledge base. To solve the aforementioned problems, inspired by the model trading framework in collaborative edge learning \cite{lim2020federated}, we can adopt a trading system in which model providers trade the trained model to other devices. Specifically, the semantic model provider has more resources to train quality semantic models with the relevant knowledge base and channel models, and the edge devices can obtain the semantic model (semantic encoder/decoder) from the model providers. Using the semantic model, edge devices can extract semantic information from the collected raw data. This enables semantic information exchange between edge devices. Furthermore, as the semantic information is helpful for the decision making of smart agents~\cite{yun2021attention}, the trading of semantic information should also be studied. Using the semantic models, the edge devices collect and trade the semantic information with interested information buyers. For example, one vehicle can buy semantic information \cite{vancea2018semantic}, \cite{liao2021road} from nearby vehicles/smart sensors about the conditions of the surrounding environment.

To promote the above two types of trade in the semantic communication system, i.e., semantic model trading and semantic information trading, we should design novel and effective incentive mechanisms:
\begin{enumerate}
    \item {\it{Semantic model trading:}} To encourage the participation of semantic model providers, incentive mechanisms are designed so that they are rewarded for supplying quality semantic models. In general, edge devices are willing to pay more for semantic models that can achieve better semantic performance. We are the first to propose a deep learning (DL) based auction mechanism to determine an allocation of the semantic model to the edge devices and the price to be paid by the edge devices to the model providers. We show analytically that the DL-based auction attains the properties of truthfulness while maximizing the revenue of the model providers.
    \item {\it{Semantic information trading:}} To facilitate the semantic information trading between multiple semantic information buyers and edge devices, e.g., vehicles that are interested in collecting semantic information about the conditions of the surrounding environment \cite{vancea2018semantic}, \cite{liao2021road}, we introduce a double auction mechanism to model the competition between the buyers and edge devices. In the auction, we propose semantic based valuation functions, i.e., the valuation of the information is a function of semantic performance of the edge devices. In particular, the semantic information buyers are willing to pay more for the semantic information with higher accuracy, and hence the edge devices have more incentive to obtain better models from the semantic model trading. Moreover, the proposed double auction mechanism shows the desired properties of individual rationality, incentive compatibility, and budget balance, which are all significant properties to achieve sustainable and rational trading.
\end{enumerate}
While many recent works have focused on improving the performance of semantic communication systems \cite{xie2020lite}, \cite{xie2021deep}, \cite{farsad2018deep}, few works have addressed the designs of incentive mechanisms for semantic communication systems. By achieving the aforementioned two kinds of trade, we propose a novel hierarchical trading system to enhance the economically-sustainable development of semantic communication systems. The main contributions of our paper are:
\begin{itemize}
    \item We propose an incentive design framework for the semantic model trading and semantic information trading to support the deployment of semantic communication systems. Our designed mechanisms support the development of semantic communication systems by motivating the participation of model providers to build and share high-quality semantic engines, buyers to acquire relevant and useful semantic information, and semantic information sellers to facilitate other stakeholders in the semantic information exchange.
     \item We model the competition in the semantic model trading and semantic information trading with auction mechanisms. Different from conventional auctions, our auction can maximize the revenue of semantic model providers while achieving the properties of individual rationality and incentive compatibility. Simulation results are provided based on a case study on semantic text transmission where we derive the valuation functions based on the sentence similarity score and bilingual evaluation understudy (BLEU) score \cite{papineni2002bleu}.
    
    \item We propose an effective feature reduction method for data transmission under a limited data transfer budget. In contrast to existing works of feature reduction techniques for semantic communication systems \cite{jiang2021deep}, \cite{yang2021semantic}, our method does not increase communication cost and reduce the performance gap between partial feature and full feature.

\end{itemize}
Compared with our previous work \cite{liew2022economics}, the significant extensions in this paper include:
\begin{itemize}
    \item In contrast to previous work in which the incentive mechanism is customized for wireless powered devices, we propose a general framework that can be applied to semantic communication systems with different purposes.
    \item While the previous work focuses on semantic information transfer, we consider both and joint semantic model trading and semantic information trading in this paper.
    \item To model realistic semantic communication systems, multiple semantic information buyers and sellers are considered instead of a single buyer setting in the previous work.
\end{itemize}

Our paper is organised as follows. In Section \ref{sec:relatedwork}, we discuss the related works of semantic communication systems and incentive mechanism design. In Section \ref{sec:sysmodel}, we detail the system model and problem formulation. In Section \ref{sec:casestudy}, we present a case study of semantic model trading and semantic information trading for semantic text transmission. In Section \ref{sec:results}, we present the numerical results, and Section \ref{sec:conclusion} concludes the paper.

\section{Related Work}\label{sec:relatedwork}
\subsection{Vehicular Networks}
With the development of vehicular infrastructure in recent years, vehicles can be seen as important network players with computing, caching and communication capabilities~\cite{li2021adaptive,ye2021joint}. However, as the number of vehicles increases, the vehicular network relies heavily on reliable real-time communication and interactions for complex operations~\cite{wu2020dynamic,nanda2019internet}, such as route planning and collision avoidance. Thus, timely and accurate information updates are vital to the development of the vehicular networks. This implies that the conventional communication paradigm which seeks the lowest latency is no longer a sustainable development direction. To make fast and accurate decisions in vehicular networks, it is important to leverage the semantic meaning of information~\cite{pappas2021goal}. The authors in~\cite{zhu2021video} design a resource allocation algorithm for semantic video transmission in vehicular networks. By using the proposed algorithm \cite{zhu2021video}, the semantic understanding accuracy of the video transmission is optimized by a multi-agent deep Q-network. The simulation results show that the proposed method can achieve as high as 70\% improvement for the density of correctly detected objects, compared with the conventional QoS and QoE based resource allocation methods. 

However, it is not realistic to train a usable semantic model for each vehicle, due to the limited computing resources and the dynamic positioning of vehicles~\cite{tayyaba20205g}. Therefore, we will consider a semantic model trading system in this paper. Moreover, considering the importance of semantic information in the vehicular networks, vehicles can then sell the semantic information to potential buyers. The trading of semantic information is gaining attraction especially for the sustainable development of large-scale multi-agent systems.

\subsection{Deep Learning Enabled Semantic Communication Systems}
Conventional communication systems focus on transmitting bits or symbols with minimum error from the transmitter to the receiver, and the performance is evaluated at the bits or symbols level. In contrast to the traditional communication systems, semantic communication system aims to communicate at the semantic level, where performance is evaluated by the recovery of the meanings of the data instead of bits accuracy. Semantic communication systems for text \cite{xie2021deep}, speech signals \cite{weng2021semantic}, and multimodal data \cite{xie2021task} first encode the data by a semantic encoder and send the encoded semantic information to the receivers. The receivers then decode the received signals with semantic decoders to recover the original data. Typically, the semantic encoders and decoders are implemented by end-to-end DL networks and trained with labeled data.

To improve the encoding efficiency, several works focus on reducing the size of the data during transmission. The authors in \cite{jiang2021deep} mask the bits according to the original sentence length to save the transmission resources. For the image classification task, the authors in \cite{yang2021semantic} use the gradient of the neural network to select import features. However, the proposed method requires extra storage cost to store the gradients of weights of the network. Most of the existing data reduction techniques are implemented together with the training process. A drawback is that, after the model is trained and parameters are fixed, further reduction of data size degrades the performance of the networks. To solve this problem, we develop an effective data reduction technique to reduce the performance gap in this paper.

\begin{figure*}[htb]
  \centering
  \centerline{\includegraphics[width=18cm]{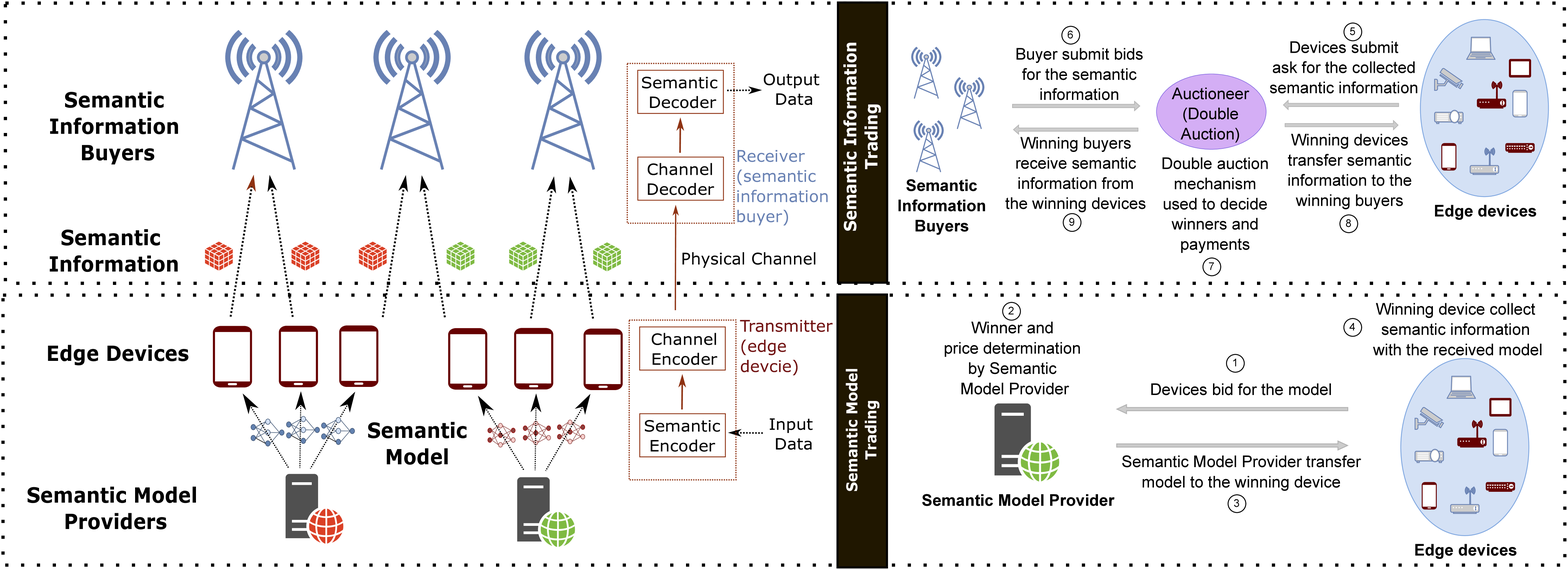}}
 \caption{The system model which includes semantic model trading and semantic information trading. In the semantic model trading, edge devices trade with semantic model providers to obtain semantic encoders/decoders for semantic communications. In the semantic information trading, edge devices equipped with semantic encoders/decoders trade semantic information with buyers.}
\label{fig:sysmod}
\end{figure*}

\subsection{Incentive Mechanism Design}
In real-world settings, data transmissions are limited by the communication resources such as bandwidth and energy. Incentive mechanisms are designed to encourage certain parties to contribute to a communication network. For example, in a multi-node wireless powered communication network, selfish wireless nodes are not willing to charge other nodes by consuming their resources. To encourage the participation of these nodes, \cite{zheng2021age} proposed incentive schemes to deal with the selfishness of wireless nodes with an Age of Information (AoI) based utility functions. In collaborative edge learning, incentive mechanisms are used to incentivise the data owners to provide the updated model parameters for global model aggregation \cite{lim2021decentralized}. 

Given that most of the communication networks are using conventional communication systems, semantic-aware incentive mechanisms are needed to be designed to motivate the participation of all parties in the development of semantic communication systems. We propose auctions as incentive mechanisms in the semantic model trading and semantic information trading, and derive the value of the semantic model and semantic information with semantic-based valuation functions.

\section{System Model and Problem Formulation}
\label{sec:sysmodel}
\subsection{System Model}
We consider a semantic communication network (Fig. \ref{fig:sysmod}) that consists of a set $\mathcal{M} = \{1,2,\ldots,m,\ldots,M\}$ of $M$ edge devices. To perform semantic encoding and decoding, the edge devices have to obtain the trained semantic models from the semantic model providers. Model trading is a common practice in collaborative edge learning, and in particular, federated learning \cite{lim2020federated}, where the model providers (sellers) receive incentives for providing trained models to the participants (buyers). In the case of semantic communications, the models being traded are the semantic encoders and decoders used for semantic information encoding and decoding, respectively. Devices with limited computation and communication resources can obtain high-quality semantic models from model trading. Moreover, it is shown that mismatches in communication channels and background knowledge of the communication environment degrade the performance of a pretrained semantic communication model \cite{xie2021deep}. Therefore, trading with model providers that perform machine learning training based on the relevant background knowledge and communication environment helps to improve the semantic performance of the devices. For example, devices can trade with the model providers that collect training data from the same certain geographical area as the buyers \cite{yang2022semantic}. 

To encourage the participation of model providers,  incentive mechanisms should be designed to ensure that model providers are appropriately rewarded from the semantic model trading process.
Similar to incentive mechanisms designed for the model trading in FL, the devices have to compete to obtain the semantic models from the semantic model providers. Intuitively, the devices are willing to pay more if the model obtained can achieve high semantic performance.

Besides, there exists a set $\mathcal{N} = \{1,2,\ldots,n,\ldots,N\}$ of $N$ semantic information buyers that are interested to obtain semantic information from the devices. For example, this may be semantic information trading between UAVs in real time \cite{yun2021attention}, and collection of semantic information for image classification tasks for autonomous vehicles \cite{yang2021semantic}. In this case, incentive mechanism design is also needed to facilitate the trading of such semantic information.

In the following, we propose two auction mechanisms for semantic model trading and semantic information trading. In the semantic model trading, we adopt a DL-based auction mechanism to derive the \textit{semantic-aware} valuation of the semantic models. The semantic model trading could be a channel to supply the semantic model for the devices to extract semantic information. Then, the semantic information from devices with higher accuracy is more valuable to semantic information buyers. For semantic information trading, we study the double auction mechanism for information trading between multiple buyers and multiple semantic information sellers. We further investigate how the semantic model obtained from the semantic model trading affects the results in the semantic information trading.

\subsection{Auction for Semantic Model Trading}
\label{smt}

The valuation of the devices for the model provided by the service provider is given by:
\begin{equation}\label{modalval}
    v_{m} = A_{p} - A_{m},
\end{equation}
where $A_{p}$ is the accuracy of the model from the model provider, and $A_{m}$ is the accuracy of the current model of device $m$ ($A_{m} = 0$ if the device does not own any model). The accuracy metric can be the text similarity score for semantic text transmission \cite{xie2021deep}, signal-to-distortion ratio (SDR) for semantic speech signal transmission \cite{weng2021semantic}, and answer accuracy in visual question answering (VQA) \cite{xie2021task}. In every round of the single-item auction, the model provider, i.e., the auctioneer collects bids $(\mathfrak{b}_1,\mathfrak{b}_2,\ldots,\mathfrak{b}_M)$ from all smarts devices, i.e., bidders, and then decides the winner, $m^*$, and corresponding payment price, $\theta_{m^*}$. The utility of the device is given by $u_m = v_m - \theta_{m^*}$, if the device is the winner and $u_m = 0$ otherwise. 

Traditional single-item auctions such as the first-price auction and Second-Price Auction (SPA) can be used to determine the winner and price. For an auction to be optimal \cite{myerson1981optimal}, it should attain the properties of incentive compatibility and individual rationality. Individual rationality guarantees that the utility of the devices is non-negative by participating in the auction, i.e., $u_m \geq 0$. Incentive compatibility ensures that each device submits bids according to their true valuations, respectively, i.e., $\mathfrak{b}_m = v_m$, regardless of the actions of other devices, and the utility of each device is maximized by submitting the truthful bid. In the first-price auction, the highest bidder wins and pays the exact bid submitted, maximizing the revenue gain of the model provider but does not guarantee incentive compatibility. In SPA, the highest bidder wins but pays the price of the second highest bidder. SPA ensures incentive compatibility but does not maximize the revenue of the model provider. 

\begin{algorithm}
  \caption{DL-Based Auction (DLA) Algorithm}\label{alg:aucalg}
  \label{alg:DLA}
  \hspace*{\algorithmicindent} \textbf{Input:} Bids of devices $\boldsymbol{\mathfrak{b}}=(\mathfrak{b}_1,\ldots,\mathfrak{b}_m,\ldots, \mathfrak{b}_M)$\\
  \hspace*{\algorithmicindent} \textbf{Output:} Winner and Payment Price
  
  \begin{algorithmic}[1]
    \STATE \textbf{Initialization:} $\mathbf{w} = [w_{qs}^m]\in \mathbb{R}_{+}^{M\times QS}, \bm{\beta} = [\beta_{qs}^m] \in \mathbb{R}^{M\times QS} $
    \WHILE{Loss function $\hat{R}(\mathbf{w,\bm{\beta}})$ is not minimized}
    \STATE Compute transformed bids $\overline{\mathfrak{b}}_m = \Phi_m(\mathfrak{b}_m) = \min_{q\in Q}\max_{s\in S}(w_{qs}^m\mathfrak{b}_m + \beta_{qs}^m)$ 
    \STATE Compute the allocation probabilities $z_m(\overline{\boldsymbol{\mathfrak{b}}}) = softmax(\overline{\mathfrak{b}}_1, \overline{\mathfrak{b}}_2,\ldots,\overline{ \mathfrak{b}}_{M+1};\kappa)$
    \STATE Compute the SPA-0 payments $\theta^0_m(\overline{\boldsymbol{\mathfrak{b}}}) = ReLU(\max_{j\neq m} \overline{\mathfrak{b}}_j)$
    \STATE Compute the conditional payment $\theta_m = \Phi_m^{-1}(\theta^0_m(\overline{\boldsymbol{\mathfrak{b}}}))$
    \STATE Compute the loss $\hat{R}(\mathbf{w} ,\bm{\beta})$
    \STATE Update parameters $\mathbf{w}$ and $\bm{\beta}$ using SGD optimizer

    \ENDWHILE
    \RETURN Winner $m^*$ and payment price $\theta_{m^*}$
  \end{algorithmic}
\end{algorithm}

We adopt a DL-based optimal auction mechanism \cite{dutting2019optimal} that can maximize the revenue of the seller while achieving the properties of incentive compatibility and individual rationality. The auctioneer (i.e., the model provider) does not have a priori knowledge about the bidders and optimal decisions in determining the winner. Nevertheless, the model provider can learn from experience and adjust the auction decision using DL-based optimal auction. 
The DL-based auction consists of three major functions: monotone increasing function, $\Phi_m$, allocation rule, $z_m$, and conditional payment rule, $\theta_m$. 
Firstly, the input bids, $\boldsymbol{\mathfrak{b}}=(\mathfrak{b}_1,\ldots,\mathfrak{b}_m,\ldots, \mathfrak{b}_M)$, are transformed by $Q$ groups of $S$ linear functions, followed by the $\min$ and $\max$ operations, i.e., the transformed bid, 
\begin{equation}
    \overline{\mathfrak{b}}_m= \Phi_m(\mathfrak{b}_m) = \min_{q\in Q}\max_{s\in S} (w_{qs}^m \mathfrak{b}_m + \beta_{qs}^m),
\end{equation}
where $w_{qs}^m\in \mathbb{R}_+$, $q=1,\dots,Q$, $s=1,\dots,S$ and $\beta_{qs}^m\in\mathbb{R}$, $q=1,\dots,Q$, $s=1,\dots,S$ are the weight and bias to be trained, respectively. 
The linear functions are strictly monotonically increasing functions to ensure the properties of incentive compatibility and individual rationality of the auction:
\newtheorem{theorem}{Theorem}
\begin{theorem}
(\cite{dutting2019optimal}) For any set of strictly monotonically increasing function \{$\Phi_1,\dots,\Phi_M$\}, an auction defined by allocation rule $z_m=z_m^0 \circ \Phi_m$ and the payment rule $\theta_m = \Phi_m^{-1} \circ \theta_m^0 \circ \Phi_m$ has the properties of incentive compatibility and individual rationality, where $z^0$ and $\theta^0$ are the allocation and payment rule of a second price auction with zero reserve, respectively, and $\circ$ indicates function composition, i.e., $(f\circ g)(x) = f(g(x))$.
\end{theorem}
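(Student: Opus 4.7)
The plan is to reduce the claim to the well-known incentive compatibility (IC) and individual rationality (IR) of the second-price auction with zero reserve (SPA-0), exploiting the fact that each $\Phi_m$ is a strictly increasing bijection, hence both order-preserving and invertible. Write $\overline{\mathfrak{b}}_m = \Phi_m(\mathfrak{b}_m)$ for the transformed bid. Under the composed rules, bidder $m$ is allocated the item whenever their transformed bid is the largest, i.e., $\overline{\mathfrak{b}}_m > \max_{j\neq m} \overline{\mathfrak{b}}_j$, and when allocated pays $\Phi_m^{-1}\bigl(\max_{j\neq m} \overline{\mathfrak{b}}_j\bigr)$ in the original bid space.

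For IC, I would fix a bidder $m$ with true valuation $v_m$, fix the others' bids $\mathfrak{b}_{-m}$, and introduce the critical value
\begin{equation}
    c_m \;=\; \Phi_m^{-1}\Bigl(\max_{j\neq m} \Phi_j(\mathfrak{b}_j)\Bigr),
\end{equation}
which is independent of $\mathfrak{b}_m$. Because $\Phi_m$ is strictly increasing, the equivalence $\mathfrak{b}_m > c_m \iff \Phi_m(\mathfrak{b}_m) > \max_{j\neq m}\Phi_j(\mathfrak{b}_j)$ holds, so bidder $m$ wins if and only if $\mathfrak{b}_m > c_m$ and, upon winning, pays exactly $c_m$. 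The standard second-price argument then applies: if $v_m > c_m$, any bid strictly above $c_m$ wins and yields utility $v_m - c_m > 0$, while any bid at or below $c_m$ yields $0$, so $\mathfrak{b}_m = v_m$ is optimal; symmetrically, if $v_m \le c_m$, any winning bid yields non-positive utility $v_m - c_m \le 0$, so truthful bidding is again optimal. Hence truthful reporting is a (weakly) dominant strategy for every bidder.

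Individual rationality then follows immediately from the characterisation: under truthful bidding, either bidder $m$ loses and receives utility $0$, or they win, in which case $v_m > c_m$ and their utility is $v_m - c_m \ge 0$.

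The main step that requires care is the order-preservation argument: strict monotonicity of each $\Phi_m$ is exactly what guarantees that the critical-value threshold in the original space, $c_m$, is well-defined and that the inequality $\mathfrak{b}_m > c_m$ transports faithfully to $\overline{\mathfrak{b}}_m > \max_{j\neq m}\overline{\mathfrak{b}}_j$; without it, $\Phi_m^{-1}$ could be multivalued or the allocation rule could cease to be monotone in $\mathfrak{b}_m$, at which point neither IC nor IR would survive. Given strict monotonicity, however, the reduction to SPA-0 is clean, and the remainder is essentially a textbook Vickrey-style argument, so I anticipate no substantive obstacle beyond carefully verifying these change-of-variable identities.
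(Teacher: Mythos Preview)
Your argument is correct: the critical-value characterisation you give, with threshold $c_m = \Phi_m^{-1}\bigl(\max_{j\neq m}\Phi_j(\mathfrak{b}_j)\bigr)$ and the observation that strict monotonicity of $\Phi_m$ transports the winning condition faithfully between the original and transformed bid spaces, is exactly the standard way to establish IC and IR for this class of transformed second-price mechanisms, and the Vickrey-style case analysis you outline is complete.

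Note, however, that the paper does not actually prove this theorem; it is stated with a citation to \cite{dutting2019optimal} and used as a black box in the subsequent DL-based auction construction. Your proof therefore supplies what the paper omits, and it coincides with the argument in the cited source, which is essentially the Myerson monotonicity-plus-threshold-payment characterisation specialised to a single item. There is nothing to compare against in the paper itself beyond the statement.
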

To ensure that the auction learnt by the network achieves incentive compatibility and individual rationality, we constrain the allocation and payment rules of the network by following Theorem 1.
After the monotone transformation, the transformed bids are passed to separate networks that approximate the allocation and payment rule. 
The allocation rule which follows the second price auction with zero reserve (SPA-0) allocation rule is approximated by a softmax function \cite{bridle1989training} to maximize the allocation probability of the highest bid, i.e.,
\begin{equation}
    z_m(\overline{\boldsymbol{\mathfrak{b}}})=\frac{e^{\kappa \overline{\mathfrak{b}}_m}}{\sum^{M+1}_{j=1} e^{\kappa \overline{\mathfrak{b}}_j}},
\end{equation}
where $\overline{\mathbf{\mathfrak{b}}}=(\overline{\mathfrak{b}}_1,\dots,\overline{\mathfrak{b}}_{M+1})$, $\overline{\mathfrak{b}}_{M+1}$ is an additional dummy input, and $\kappa$ determines the quality of the approximation. The higher the value of $\kappa$, the higher the accuracy of approximation but the allocation function is less smooth and harder to optimize.
The SPA-0 payment rule is given by:
\begin{equation}
    \theta^0_m(\overline{\boldsymbol{\mathfrak{b}}}) = ReLU(\max_{j\neq m} \overline{\mathfrak{b}}_j),
\end{equation}
where $ReLU(x)=\max(x,0)$ is used to ensure that the payment is non-negative.
To obtain the payment price, the inverse transformation function is applied on the SPA-0 price of the transformed bids, i.e., 
\begin{equation}
    \theta_m = \Phi_m^{-1}(\theta^0_m(\overline{\boldsymbol{\mathfrak{b}}})),
\end{equation}
where the inverse transformation function can be expressed by:
\begin{equation}
    \Phi_m^{-1}(y)=\max_{q\in Q}\min_{s\in S}(w^m_{qs})^{-1}(y-\beta_{qs}^m).
\end{equation}

To maximize the revenue, the network optimizes a loss function that is the negative value of the seller revenue. The loss function is given by 
\begin{equation}
    \hat{R}(\mathbf{w} ,\bm{\beta})=-\sum^M_{m=1}z_m(\overline{\boldsymbol{\mathfrak{b}}})\theta_m.
\end{equation}

\subsection{Auction for Semantic Information Trading}
We consider $N$ semantic information buyers and $M$ devices, where the buyers are interested in buying semantic information from the devices. Consider that the devices obtain the semantic models from the semantic model trading, the semantic information buyers are willing to pay more for the semantic information from devices with high accuracy, $A_m$. 
We propose a single-round double auction for the one-to-one mapping of the buyers and the sellers. In the double auction, there are
\sloppy
\begin{itemize}
        \item A set of semantic information buyers $\mathcal{N} = \{1,\ldots,n,\ldots,N\}$ 
        \item A set of semantic information sellers $\mathcal{M} = \{1,\ldots,m,\ldots,M\}$, the devices that provide semantic information to the buyers
        \item A trusted third party, the auctioneer
\end{itemize}

Based on the semantic performance, each buyer has different preferences for the devices. Let $\mathbf{b_n} = (b_n^1,\ldots,b^M_n)$ denote the bid vector of buyer $n$, where $b_n^m$ is the bid of buyer $n$ for device $m$, i.e., the price that buyer $n$ is willing to pay for receiving semantic information from device $m$.

Let $\mathbf{a} = (a_1,\ldots,a_M)$ denote the ask vector of the devices, where $a_m$ is the ask of device $m$, i.e., the price that device $m$ is willing to receive for trading the semantic information. The value of the semantic information from device $m$ to buyer $n$ can be expressed as
\begin{equation}
    v_n^m = A_n^m(A_m),
\end{equation}
where $A_n^m$ is the accuracy of the semantic information transmitted by device $m$ to buyer $n$, and $A_m$ (determined by the semantic model trading) is the upper bound of the achievable accuracy of current semantic model. 

Let $p_n$ be the price that buyer $n$ pays, the utility of buyer $n$ is given by 
   \begin{equation}
   \label{iotutil}
       u_n^b =
           \begin{cases}
           v_n^m - p_n & \text{if buyer $n$ wins the auction,}\\
           0 & \text{otherwise.}
    \end{cases}     
   \end{equation}
Note that to compare the utility of buyer $n$ when it wins different devices, we also use $u_{n,m}^b$ and $u_{n,m'}^b$ to denote the utility of buyer $n$ when it wins to obtain semantic information of device $m$ and $m'$, respectively.

Following \cite{jiao2020toward}, the data collection cost is given by 
\begin{equation}
    c_m^{d} = d_m \gamma_m,
\label{dcost}
\end{equation}
where $d_m$ and $\gamma_m$ are the data size and unit data cost, respectively.
The computational cost can be formulated as
\begin{equation}
    c_m^{cp} = d_m \Gamma_m,
\label{cpcost}
\end{equation}
where $\Gamma$ is the unit computational cost to extract semantic information from the collected data. This cost can be due to the energy consumption \cite{lim2021decentralized} or edge/cloud computation resource rental fee \cite{dinh2020online}. The communication cost for device $m$ to transmit the semantic information is
\begin{equation}
    c_m^{cm} = P_m\frac{\mathbb{N}_m}{R}\nu_m,
\label{cmcost}
\end{equation}
where $P_m$ is the communication power, $\mathbb{N}_m$ is the number of bits used to represent the semantic information, $R$ is the transmission rate in bits per second, and $\nu_m$ is the unit energy cost for communication.
The cost of the semantic model is given by
\begin{equation}
    c_m^{md} = \frac{\theta_m}{T_m},
\label{mdcost}
\end{equation}
where $\theta_m$ is the price paid for the current semantic model (determined by the model trading auction in Section \ref{smt}) and $T_m$ is the expected number of transmissions with the model.

The total cost for device $m$ to transmit the semantic information is then defined as follows:
\begin{equation}
\begin{aligned}
    C_m &= c_m^{d} + c_m^{cp} + c_m^{cm} + c_m^{md}  \\ 
    & = d_m \gamma_m + d_m \Gamma_m + P_m\frac{\mathbb{N}_m}{R}\nu_m + \frac{\theta_m}{T_m}.
\label{cost}
\end{aligned}
\end{equation}

Let $y_m$ be the payment to device $m$, the utility of the device $m$ is given by
   \begin{equation}
       u_m^s =
           \begin{cases}
           y_m - C_m & \text{if device $m$ wins the auction,}\\
           0 & \text{otherwise.}
    \end{cases}
   \end{equation}

The proposed double auction has two stages, the {\it candidate-determination and pricing} stage, and the {\it candidate-elimination} stage. The algorithms for the two stages are shown in Algorithms \ref{alg:cdp} and \ref{alg:ce} respectively. Note that the DLA refers to the DL-Based Auction in Algorithm \ref{alg:DLA}. In the {\it candidate-determination and pricing} stage, the auctioneer determines the buyer candidates of each device, the prices that the buyer candidates pay, and the payment to be rewarded to the devices.

Let $n$ and $\theta_n$ denote the winning buyer and payment price determined by DLA, respectively. For each device $m$, all bids are sent to DLA to determine the winner and payment price. If the payment price is not lower than the ask $a_m$, i.e., $\theta_n \ge a_m$, then the buyer $n$ is added to the set of buyer candidates $\mathcal{N}_c$ with price $p_n = \theta_n$, and device $m$ is added to the set of seller candidates $\mathcal{M}_c$ with payment $y_m = \theta_n$.

\begin{algorithm}
  \caption{Candidate Determination and Pricing}\label{alg:cdp}
  \hspace*{\algorithmicindent} \textbf{Input:} $\mathcal{N},\mathcal{M},\mathbf{b},\mathbf{a}$ \\
  \hspace*{\algorithmicindent} \textbf{Output:} $\mathcal{N}_c,\mathcal{M}_c,\mathcal{P}_c,\mathcal{Y}_c$, $\hat{\sigma}$
  \begin{algorithmic}[1]
    \FOR{$ m \in \mathcal{M}$}
    \STATE $n, \theta_n= DLA(\{b^m_n,\forall n \in \mathcal{N}\})$
    \IF{$\theta_n \geq a_m$}
      \STATE $\hat{\sigma}(m) = n$, buyer $n$ is added into $\mathcal{N}_c$, and seller $m$ is added into $\mathcal{M}_c$
      \STATE $p_n = y_m = \theta_n$
      \STATE $p_n$ and $y_m$ are added into $\mathcal{P}_c$ and $\mathcal{Y}_c$, respectively
    \ENDIF
    

    \ENDFOR
    
  \end{algorithmic}
\end{algorithm}

After the first stage, each buyer candidate may win more than one device. In the {\it candidate-elimination} stage, for each buyer candidate, the algorithm selects the best device such that the buyer yields the highest utility in Equation \eqref{iotutil}. If more than one device yields the same highest utility for the buyer, the best device is randomly selected.

In the following, we prove that the double auction mechanism in our model satisfies the properties of individual rationality, incentive compatibility, and budget balanced.

\begin{algorithm}
  \caption{Candidate-elimination}\label{alg:ce}
  \hspace*{\algorithmicindent} \textbf{Input:} $\mathcal{N}_c, \mathcal{M}_c ,\mathcal{P}_c, \mathcal{Y}_c, \hat{\sigma}, \mathbf{b}$ \\
  \hspace*{\algorithmicindent} \textbf{Output:} $\mathcal{N}_w,\mathcal{M}_w,\mathcal{P}_w,\mathcal{Y}_w$
  \begin{algorithmic}[1]
  \STATE $\mathcal{N}_w \leftarrow \mathcal{N}_c,\mathcal{M}_w \leftarrow \mathcal{M}_c, \mathcal{P}_w \leftarrow \mathcal{P}_c,\mathcal{Y}_w \leftarrow \mathcal{Y}_c, \sigma \leftarrow \hat{\sigma} $ 
    \FOR{any two seller $m, m' \in \mathcal{M}_w, m \ne m'$}
      \IF{$\sigma(m) = \sigma(m')$}
        \IF{$u_{n,m}^b = u_{n,m'}^b$}
        \STATE $m^*\leftarrow$ randomly selected from $\{m,m'\}$
        \ELSE 
        \STATE $m^*\leftarrow$ arg $\min_{k\in\{m,m'\}}u_{n,k}^b$
    \ENDIF
    \ENDIF
    \STATE $\mathcal{M}_w \leftarrow \mathcal{M}_w \setminus \{m^*\}$
    \STATE $\mathcal{P}_w \leftarrow \mathcal{P}_w \setminus \{p_{m^*}\}$
    \STATE $\mathcal{Y}_w \leftarrow \mathcal{Y}_w \setminus \{y_{m^*}\}$
        
    \ENDFOR

    
  \end{algorithmic}
\end{algorithm}

\newtheorem{lemma}{Lemma}
\begin{theorem}
The proposed double auction mechanism is individually rational. All winning buyers and sellers are rewarded with non-negative utilities i.e. $p_n \le b_n^m$ and $y_m \ge a_m$ 
\end{theorem}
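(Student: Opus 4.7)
The plan is to verify the two stated inequalities $p_n \le b_n^m$ and $y_m \ge a_m$ separately by tracing Algorithms~\ref{alg:cdp} and~\ref{alg:ce}, with the individual-rationality guarantee of the DLA subroutine (Theorem~1) doing the substantive work on the buyer side. Non-negativity of the winning utilities then follows immediately from $u_n^b = v_n^m - p_n$ and $u_m^s = y_m - C_m$ under the standard presumptions that each buyer bids no more than its valuation (later secured by incentive compatibility) and each device sets its ask at or above its cost $C_m$.

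For the seller bound I would read directly off the admission rule in line~3 of Algorithm~\ref{alg:cdp}: a device $m$ is admitted into $\mathcal{M}_c$ only when the DLA returns a price $\theta_n$ satisfying $\theta_n \ge a_m$, and at that same step the mechanism records $y_m = \theta_n$. Hence every payment entering $\mathcal{Y}_c$ already dominates the associated ask. Because the elimination stage in Algorithm~\ref{alg:ce} operates purely by set deletion and never rewrites a surviving payment, the inequality is inherited unchanged by every seller in the final winning set $\mathcal{M}_w$.

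For the buyer bound I would apply Theorem~1 to the DLA executed for device $m$. The theorem asserts that the auction induced by the strictly monotone transforms $\Phi_m$ together with the rules $z_m^0\circ\Phi_m$ and $\Phi_m^{-1}\circ\theta_m^0\circ\Phi_m$ is individually rational, so the winning buyer's charged price cannot exceed the submitted bid, i.e.\ $\theta_n \le b_n^m$. Algorithm~\ref{alg:cdp} then sets $p_n = \theta_n$, and Algorithm~\ref{alg:ce} once again only deletes entries without raising any $p_n$; consequently $p_n \le b_n^m$ persists for every buyer that survives into $\mathcal{N}_w$.

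The main obstacle, such as it is, lies in the bookkeeping of the two-stage mechanism rather than in any deep inequality: when a buyer provisionally wins several devices, I must ensure that whichever triple $(m, p_n, y_m)$ is retained by Algorithm~\ref{alg:ce} still respects both inequalities. I would dispatch this by emphasising that the elimination rule touches only set membership, removing $m^{*}$ together with its associated $p_{m^{*}}$ and $y_{m^{*}}$, and never alters the numerical price or payment attached to a surviving candidate, so the inequalities established in the candidate-determination stage carry over verbatim into the final output.
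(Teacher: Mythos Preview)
Your proposal is correct and follows essentially the same route as the paper's proof: invoke the individual rationality of DLA (Theorem~1) to get $\theta_n \le b_n^m$, read $\theta_n \ge a_m$ directly off the admission test in Algorithm~\ref{alg:cdp}, use $p_n = y_m = \theta_n$, and then note that Algorithm~\ref{alg:ce} only deletes entries without modifying surviving prices or payments. Your version is more carefully spelled out---particularly the bookkeeping observation about the elimination stage---but the argument is the same.
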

\begin{proof}
From Algorithm \ref{alg:cdp}, since DLA has the property of individual rationality \cite{dutting2019optimal}, we have $\theta_n \leq b_n^m$. Therefore $p_n \le b_n^m$ and $y_m \ge a_m$, individual rationality is satisfied in the candidate determination and pricing stage. Since Algorithm \ref{alg:ce} does not change the value of $p_n$ and $y_m$, the individual rationality is preserved after the candidate eliminations.


\end{proof}

\begin{theorem}
The proposed double auction mechanism is incentive compatible. All buyers and sellers submit their bids and asks truthfully as they cannot improve their utilities by submitting bids and asks that are different from their true valuations.
\end{theorem}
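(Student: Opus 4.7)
The plan is to prove incentive compatibility for buyers and for sellers separately, leveraging the IC of the DLA sub-routine (Theorem 1) together with the fact that the DLA instances for different devices operate on disjoint coordinates of the bid/ask data.

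For buyers, I will fix any buyer $n$ and any strategies $(\mathbf{b}_{-n}, \mathbf{a})$ of the other participants, and compare the truthful bid $\mathbf{b}_n^\star = (v_n^1, \ldots, v_n^M)$ against an arbitrary deviation $\tilde{\mathbf{b}}_n$. The key structural observation is that line 2 of Algorithm \ref{alg:cdp} runs DLA for device $m$ using only the $m$-th coordinate of each buyer's bid, so $b_n^m$ affects \emph{only} the DLA output for device $m$; in particular the payment $\theta_n$ returned by the DLA does not depend on the winner's own bid. Theorem 1 then gives that $b_n^m = v_n^m$ maximizes buyer $n$'s per-device DLA utility, and the DLA's individual-rationality property guarantees that under truthful bidding every device in the candidate set $\mathcal{C}_n^\star$ yields non-negative sub-auction utility $v_n^m - \theta_n$. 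I will then partition the devices into three cases: ``candidate under both profiles'', ``only under truthful'', and ``only under deviation''. In the first case the per-device utilities agree, because the DLA payment is unchanged. In the third case, DLA-IC forces $v_n^m - \theta_n \leq 0$, since truthful bidding would have kept $n$ out of that DLA and yielded $0$; hence every ``new'' candidate contributes a non-positive sub-auction utility. Because Algorithm \ref{alg:ce} reduces buyer $n$'s final payoff to a max over the candidate set (taken as $0$ when empty), comparing the two maxima term-by-term will give $\tilde U_n \leq U_n^\star$.

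For sellers, I will fix seller $m$ and all other strategies. Because the DLA in Algorithm \ref{alg:cdp} consumes only buyer bids, the price $\theta_n$ is independent of $a_m$, and whenever $m$ wins the payment satisfies $y_m = \theta_n$. The ask $a_m$ therefore enters the mechanism only through the admission test $\theta_n \geq a_m$; furthermore, the elimination in Algorithm \ref{alg:ce} depends only on buyer-side utilities and is insensitive to $a_m$ beyond admission. Under the truthful ask $a_m^\star = C_m$, seller $m$ is admitted precisely when the trade is profitable ($\theta_n \geq C_m$, i.e., $y_m - C_m \geq 0$); overbidding can only reject profitable opportunities, and underbidding can only admit loss-making trades with $\theta_n < C_m$. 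Either deviation weakly decreases $m$'s utility.

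The seller side is essentially bookkeeping, so the buyer side will be the main obstacle: the outer max in Algorithm \ref{alg:ce} blocks a naive ``per-device'' reduction of aggregate utility to per-sub-auction utility. The crucial inequality I will need is the sign bound on ``new'' candidates described above; combined with the non-negativity of the truthful max (by DLA-IR), this is precisely what drives the final comparison $\tilde U_n \leq U_n^\star$ and thereby closes the buyer case.
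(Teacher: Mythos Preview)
Your proposal is correct and rests on the same load-bearing facts as the paper (DLA incentive compatibility and individual rationality from Theorem~1, together with the independence of the DLA payment from the winner's own bid). The seller argument is essentially identical to the paper's Lemma~\ref{lemma:selleric}: the ask $a_m$ enters only through the admission test $\theta_n\ge a_m$, the payment $y_m=\theta_n$ is fixed by the buyers' bids alone, and Algorithm~\ref{alg:ce} is decided by buyer utilities, so over-asking can only forfeit a profitable match and under-asking can only admit a loss-making one.

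On the buyer side you reach the same conclusion as the paper's Lemma~\ref{lemma:buyeric} but via a different decomposition. The paper partitions by the buyer's \emph{final outcome} under truthful reporting (winner versus loser), and then further by which device the deviation secures, yielding several sub-cases (same device; a different device that was already $n$'s candidate; a different device that was some other buyer's candidate; a device that was no candidate at all). You instead partition at the \emph{device} level into $C_n^\star\cap\tilde C_n$, $C_n^\star\setminus\tilde C_n$, and $\tilde C_n\setminus C_n^\star$, and then exploit that Algorithm~\ref{alg:ce} leaves the buyer with the maximum per-device surplus over its candidate set. This is somewhat more compact: once you have (i) equal per-device surplus on the intersection (winner-bid-independent payment), (ii) non-positive surplus on every new candidate in $\tilde C_n\setminus C_n^\star$ by DLA-IC, and (iii) $U_n^\star\ge 0$ by DLA-IR, the inequality $\tilde U_n\le U_n^\star$ is a single max-versus-max comparison rather than a case enumeration. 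The paper's organisation, in exchange, makes each concrete manipulation scenario explicit. Both routes invoke exactly the same DLA properties as the decisive step.
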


\begin{proof} 
We prove the incentive compatibility by the following lemmas:
\begin{enumerate}
    \item The proposed double auction mechansim is truthful for the sellers (as shown in Lemma \ref{lemma:selleric}).
    \item The proposed double auction mechanism is truthful for the buyers (as shown in Lemma \ref{lemma:buyeric}).
\end{enumerate}
\end{proof}

\begin{lemma}
The proposed double auction mechanism is truthful for the sellers.
\label{lemma:selleric}
\end{lemma}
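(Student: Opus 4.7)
The plan is to fix an arbitrary seller $m$ with true cost $C_m$ and a fixed profile of bids and asks from the other participants, and to show that reporting $a_m = C_m$ weakly dominates every alternative report $a'_m$. The key structural observation is that $m$'s ask enters the two-stage mechanism at only one place: the threshold test $\theta_n \geq a_m$ on line~3 of Algorithm~\ref{alg:cdp}.

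First I would record two invariance properties. The DLA call on line~2 is invoked with the buyer-bid vector $\{b_n^m : n \in \mathcal{N}\}$ alone, so the matched buyer $n^* := \hat{\sigma}(m)$ and its payment $\theta_{n^*}$ depend only on buyers' bids and not on $a_m$. The candidate-elimination stage in Algorithm~\ref{alg:ce} then compares buyer utilities $u_{n^*,k}^b = v_{n^*}^k - p_{n^*}$ across sellers $k$ that were matched to $n^*$ by DLA and passed their own thresholds; because $v_{n^*}^k$ depends on $k$'s accuracy and $p_{n^*} = \theta_{n^*}$ is already fixed, the elimination outcome for $m$, conditional on $m \in \mathcal{M}_c$, is again independent of $a_m$. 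Let $W$ denote the ($a_m$-independent) event that $m$ survives elimination whenever it enters $\mathcal{M}_c$.

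Given these invariances, seller $m$'s utility under ask $a_m$ is exactly $\theta_{n^*} - C_m$ when both $a_m \leq \theta_{n^*}$ and $W$ hold, and $0$ otherwise. I then split into two cases. If $\theta_{n^*} \geq C_m$, the truthful report passes the threshold and yields a nonnegative payoff; any overreport $a'_m > \theta_{n^*}$ drops $m$ out for payoff $0 \leq \theta_{n^*} - C_m$, while any other report gives an identical outcome. If $\theta_{n^*} < C_m$, the truthful report fails the threshold for payoff $0$, whereas the only way to gain participation is to underreport $a'_m \leq \theta_{n^*}$, which delivers $\theta_{n^*} - C_m < 0$ under $W$ and $0$ otherwise---never an improvement. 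Hence truthful reporting is weakly dominant.

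The main obstacle I anticipate is convincing the reader that the candidate-elimination stage really is invariant in $a_m$: since $\mathcal{M}_c$ contains other sellers also competing for $n^*$, one might worry that $m$'s ask could indirectly reshape that competition. The resolution is that the other sellers' presence in $\mathcal{M}_c$ is determined by their own asks and by DLA outputs computed for them, none of which depend on $a_m$. Once this is made explicit, the remainder is the standard single-parameter threshold-truthfulness argument.
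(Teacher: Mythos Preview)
Your proof is correct and rests on the same core observation as the paper's: the seller's ask enters only through the threshold test $\theta_n \ge a_m$ in Algorithm~\ref{alg:cdp}, while the DLA output and the candidate-elimination outcome are both independent of $a_m$. The paper organizes the argument by partitioning sellers into $\mathcal{M}_w$, $\mathcal{M}_c\setminus\mathcal{M}_w$, and $\mathcal{M}\setminus\mathcal{M}_c$ and enumerating deviation cases in each, whereas you collapse this into the single-parameter threshold comparison $\theta_{n^*}\gtrless C_m$; the two are equivalent, with your presentation slightly more economical.
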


\begin{proof} To prove that the proposed double auction mechanism is truthful for the sellers, we discuss the three possible outcomes for the sellers in the following subsets:
\begin{enumerate}
    \item Subset $\mathcal{M}_w$, sellers that win the auction,
    \item Subset $\mathcal{M}_c \setminus \mathcal{M}_w$, sellers that are selected as candidates but are eliminated during the candidate elimination stage, and
    \item Subset $\mathcal{M} \setminus \mathcal{M}_c$, sellers that are not selected as candidates.
\end{enumerate}
In each of the subsets, we discuss the cases where the sellers bid untruthfully. In each case, we show that the sellers cannot achieve higher utilities with the untruthful bids. Note that tilde $\widetilde{\cdot}$ is shown for the notations to indicate the outcomes of the untruthful cases.
\begin{enumerate}[leftmargin=*]
\item For seller $m \in \mathcal{M}_w$:

    {\it Case 1.} Seller $m$ does not win the auction with untruthful ask, $\widetilde{u}_m^s = 0 \le u_m^s$.
    
    {\it Case 2.} Seller $m$ wins the auction with untruthful ask. In this case, the payment does not change because the input bids to DLA are not changed, i.e., $\widetilde{u}_m^s = u_m^s$.
    
    

\item For seller $m \in \mathcal{M}_c \setminus \mathcal{M}_w$, changing ask does not change the price as discussed in the case of seller $m \in \mathcal{M}_w$. Therefore seller $m \in \mathcal{M}_c \setminus \mathcal{M}_w$ does not win the auction regardless of the value of $a_m$, $\widetilde{u}_m^s = u_m^s = 0$. 
\item For seller $m \in \mathcal{M} \setminus \mathcal{M}_c$:

    {\it Case 1.} Seller $m$ does not win by asking untruthfully, i.e., $m \notin \widetilde{\mathcal{M}}_w$, therefore the utility remains unchanged, $\widetilde{u}_m^s = u_m^s = 0$.
    
    {\it Case 2.} Seller $m$ wins by asking untruthfully, i.e., $m \in \widetilde{\mathcal{M}}_w$. Let buyer $n$ be the winner of semantic information from $m$ with price $\widetilde{p}_n = \widetilde{y}_m$. 
    To win the auction, $m$ has to ask lower than the true valuation such that $\widetilde{a}_m < C_m$. As the payment is not affected by $\widetilde{a}_m$, we have $\widetilde{y}_m = y_m$ and since $m$ does not win by asking truthfully, $y_m < C_m$, therefore $m$ suffers negative utility in this case, i.e., $\widetilde{u}_m^s = \widetilde{y}_m - C_m < 0 = u_m^s$.
    

Therefore we can conclude that the sellers cannot obtain a higher utility by asking untruthfully.

\end{enumerate}
\end{proof}
\begin{lemma}
The proposed double auction mechanism is truthful for the buyers.
\label{lemma:buyeric}
\end{lemma}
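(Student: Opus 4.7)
The plan is to mirror the structure of Lemma~\ref{lemma:selleric}: partition the buyers according to their outcome under truthful bidding and, in each cell of the partition, show that any deviation $\widetilde{\mathbf{b}}_n$ is no better than truthful bidding. The main tool is the incentive compatibility (IC) of the DL-based auction (DLA) invoked on line~2 of Algorithm~\ref{alg:cdp}, inherited from Theorem~1. Because Algorithm~\ref{alg:ce} modifies only the seller set $\mathcal{M}_w$, the buyer set satisfies $\mathcal{N}_w=\mathcal{N}_c$, so buyers split naturally into (i) buyers in $\mathcal{N}_w$ and (ii) buyers in $\mathcal{N}\setminus\mathcal{N}_c$.

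Before the case analysis, I would isolate a decoupling observation. The scalar $b_n^m$ enters only the DLA instance for seller $m$, and the DLA payment $\theta_n$ for that instance depends only on the competing bids at seller $m$, not on $b_n^m$ itself. Hence the reserve-type check $\theta_n\ge a_m$ on line~3 of Algorithm~\ref{alg:cdp} cannot be manipulated by altering $b_n^m$: the deviation can only flip whether $n$ wins the DLA at seller $m$, and by DLA IC the per-seller utility $u_{n,m}^b$ attainable by $n$ at seller $m$ is maximized at $b_n^m=v_n^m$. Since the candidate-elimination step in Algorithm~\ref{alg:ce} assigns to each buyer the candidate seller with the largest such $u_{n,m}^b$, buyer $n$'s final utility equals the maximum of $u_{n,m}^b$ over the candidate set (and $0$ if that set is empty); this maximum is therefore also attained by bidding truthfully in every coordinate.

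With this in place, the case analysis is brief. For subset~(i) let $m^{*}$ denote the seller matched with $n$ under truthful bidding, so that $u_n^b=v_n^{m^{*}}-p_n\ge 0$ by DLA IR. A deviation either leaves $n$ with no candidate (giving $\widetilde u_n^b=0\le u_n^b$) or pairs $n$ with some $m'$; in the latter case, DLA IC at $m'$ yields $\widetilde u_{n,m'}^b\le u_{n,m'}^b$, and $u_{n,m'}^b\le u_n^b$ either because the elimination selected $m^{*}$ over $m'$, or because $m'$ was not a truthful candidate and so the truthful per-seller utility at $m'$ equals $0$. For subset~(ii) a symmetric argument shows that any deviation creating a candidacy at some $m'$ can yield only non-positive utility there. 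The main obstacle, and the reason to state the decoupling observation upfront, is that a single deviation $\widetilde{\mathbf{b}}_n$ can simultaneously reshape the candidate set of $n$ and alter several per-seller outcomes; handling this by hand produces many sub-sub-cases, whereas the decoupling reduces the argument to applying DLA IC coordinate by coordinate.
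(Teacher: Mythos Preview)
Your proposal is correct and follows the same case decomposition as the paper: partition the buyers into $\mathcal{N}_w$ and its complement, and for each seller $m'$ that a deviating buyer might end up matched with, bound the deviation utility using the properties of the per-seller DLA. Your decoupling observation (that the DLA payment, and hence the reserve check $\theta_n\ge a_m$, is independent of the winner's own bid) makes explicit what the paper uses implicitly across its sub-cases, and you correctly appeal to DLA \emph{incentive compatibility} where the paper somewhat loosely writes ``individual rationality'' for the same inequalities.
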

\begin{proof}
To prove that the proposed double auction mechanism is truthful for the buyers, we discuss the two possible outcomes for the buyers in the following subsets:
\begin{enumerate}
    \item Subset $\mathcal{N}_w$, buyers that win the auction, and
    \item Subset $\mathcal{N} \setminus \mathcal{N}_w$, buyers that lose the auction.
\end{enumerate}
In each of the subsets, we discuss the cases where the buyers ask untruthfully. In each case, we show that the buyers cannot achieve higher utilities with the untruthful asks. Note that tilde $\widetilde{\cdot}$ is shown for the notations to indicate the outcomes of the untruthful cases.
\begin{enumerate}[leftmargin=*]
    \item For buyer $n \in \mathcal{N}_w$, assuming $n$ wins seller $m$ by bidding truthfully. Let us consider the following cases when buyer $n$ bids untruthfully:
    
    {\it Case 1.} Buyer $n$ loses with untruthful bid, $\widetilde{u}_n^b = 0 \le u_n^b$.
    
    {\it Case 2.} Buyer $n$ wins the same seller $m$ with untruthful bid, given individual rationality property of DLA, we have $\widetilde{u}_n^b \le u_n^b$.
    
    {\it Case 3.} Buyer $n$ wins with a different seller $m'$ with untruthful bid. 
    Let us consider the following cases when buyer $n$ bids truthfully:
    \begin{itemize}[leftmargin=*]
        \item Seller $m' \in \mathcal{M}_c$ and $\hat{\sigma}(m')=n$. 
        Since buyer $n$ wins $m$ in the truthful case, we have $u_{n,m'}^b \le u_{n,m}^b$. Given that DLA has the property of individual rationality, we have $\widetilde{u}_{n,m'}^b \le u_{n,m'}^b$.
        Thus we know that $\widetilde{u}_{n,m'}^b \le u_{n,m}^b$. 
        \item Seller $m' \in \mathcal{M}_c$ and $\hat{\sigma}(m') \ne n$. It means that there is another buyer candidate $n'$ with higher or equal bid for $m'$, i.e., $b_{n'}^{m'} \ge b_n^{m'}$. When buyer $n$ wins $m'$ by bidding untruthfully, since DLA satisfies the individual rationality constraint, we have $\widetilde{u}_{n,m'}^b \le 0$. From Theorem 2, we know that $u_{n,m}^b \ge 0$ (all winning buyers and sellers are rewarded with non-negative utility), thus we have $\widetilde{u}_{n,m'}^b \le u_{n,m}^b$.
        \item Seller $m' \notin \mathcal{M}_c$ and buyer $n$ wins $m'$ by bidding untruthfully. Since DLA has the property of individual rationality, we have $\widetilde{u}_{n,m'}^b \le 0$. From Theorem 2, we know that $u_{n,m}^b \ge 0$, thus we have $\widetilde{u}_{n,m'}^b \le u_{n,m}^b$.
    \end{itemize}

    \item For buyer $n \in \mathcal{N}\setminus\mathcal{N}_w$ with utility $u_n^b = 0$. We consider the following cases when buyer $n$ bids untruthfully. 
    
    {\it Case 1.} Buyer $n$ loses with untruthful bid, $\widetilde{u}_n^b = 0 = u_n^b$.

    {\it Case 2.} Buyer $n$ wins seller $m$ by bidding untruthfully. Since DLA has the property of individual rationality, we have $\widetilde{u}_n^b \le 0 = u_n^b$.

\end{enumerate}
Therefore we can conclude that the buyers cannot obtain a higher utility by bidding untruthfully.
\end{proof}

\begin{theorem}
The proposed double auction mechanism is budget balanced. The total price paid by the winning buyers is not less than the total payment to the winning sellers, i.e., $\sum_{n \in \mathcal{N}_w} p_n \ge \sum_{m \in \mathcal{M}_w} y_m$.
\end{theorem}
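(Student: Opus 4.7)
The plan is to show that in fact the two sums are equal, so that budget balance holds with equality rather than strict inequality. The key observation is that the auctioneer passes through the DLA payment one-to-one between a winning buyer and a winning seller without any markup or discount.

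First I would inspect Algorithm~\ref{alg:cdp} and record what happens for each seller $m$ that enters the candidate set $\mathcal{M}_c$. By line 2, the DLA returns a buyer $n$ together with a payment $\theta_n$; by the test on line 3 and the assignment on line 5, whenever $m$ is added to $\mathcal{M}_c$, the quantities written into $\mathcal{P}_c$ and $\mathcal{Y}_c$ satisfy $p_n = y_m = \theta_n$. In other words, each candidate pair $(m,\hat{\sigma}(m))$ contributes the same value $\theta_{\hat{\sigma}(m)}$ to both the price side and the payment side. Consequently the invariant
\begin{equation}
\sum_{m \in \mathcal{M}_c} y_m \;=\; \sum_{m \in \mathcal{M}_c} \theta_{\hat{\sigma}(m)} \;=\; \sum_{p \in \mathcal{P}_c} p
\end{equation}
holds entering Algorithm~\ref{alg:ce}.

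Next I would verify that Algorithm~\ref{alg:ce} preserves this invariant. Each iteration in which a seller $m^*$ is eliminated simultaneously removes the associated $p_{m^*}$ from $\mathcal{P}_w$ (line 12) and $y_{m^*}$ from $\mathcal{Y}_w$ (line 13); since these were tied to the same DLA outcome in the first stage, they carry the same numerical value and the invariant continues to hold. Moreover, after all eliminations the mapping between $\mathcal{N}_w$ and $\mathcal{M}_w$ is one-to-one: each buyer kept in $\mathcal{N}_w$ is matched to the unique surviving seller in $\mathcal{M}_w$ who maximizes its utility (ties broken arbitrarily in line 5), so we may legitimately re-index $\mathcal{P}_w$ by the winning buyers. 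Combining these facts gives
\begin{equation}
\sum_{n \in \mathcal{N}_w} p_n \;=\; \sum_{m \in \mathcal{M}_w} y_m,
\end{equation}
which in particular implies the required inequality and establishes budget balance.

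The main obstacle is purely notational: the paper writes $p_n$ (indexed by buyer) in Algorithm~\ref{alg:cdp} but $p_{m^*}$ (indexed by seller) in Algorithm~\ref{alg:ce}, so I would begin the write-up by fixing notation that makes the matching $\sigma:\mathcal{M}_w \to \mathcal{N}_w$ explicit. Once that bookkeeping is in place, the argument reduces to the one-line observation that every entry of $\mathcal{P}_c$ is created together with an identical entry in $\mathcal{Y}_c$, and eliminations in Algorithm~\ref{alg:ce} are applied to both sets in lockstep.
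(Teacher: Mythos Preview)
Your proposal is correct and follows essentially the same approach as the paper: both arguments observe that Algorithm~\ref{alg:cdp} creates each candidate price--payment pair with $p_n = y_m$, and that Algorithm~\ref{alg:ce} removes them in lockstep, so the two sums are in fact equal. The paper's proof states this in a single sentence, whereas your version spells out the invariant explicitly and flags the buyer-versus-seller indexing mismatch, but the underlying idea is identical.
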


\begin{proof}
According to Algorithms \ref{alg:cdp} and \ref{alg:ce}, the price that winning buyers pay and the payment received by winning sellers are equal for every winning seller-buyer pairs. Thus, we have
\begin{equation}
    \sum_{n \in \mathcal{N}_w} p_n - \sum_{m \in \mathcal{M}_w} y_m = 0.
\end{equation}
We can conclude that the double auction mechanism is budget balanced.
\end{proof}

In Algorithm \ref{alg:cdp}, since there are $M$ sellers in set $\mathcal{M}$, the time complexity of the candidate determination and pricing stage is $O(M)$. In Algorithm \ref{alg:ce}, we know that $|\mathcal{M}_c| \le |\mathcal{M}|=M$. In the worst case, the for-loop runs for $\frac{M(M-1)}{2}$ times. Therefore, Algorithm \ref{alg:ce} has the time complexity of $O(\frac{M(M-1)}{2})=O(M^2)$. Overall, the proposed double auction mechanism is a polynomial time algorithm with the time complexity of $O(M^2)$.

\section{Case Study: Semantic Text Transmission}\label{sec:casestudy}
In this section, we apply the proposed auction mechanisms to the semantic text transmission. We derive the valuations of the semantic model trading and semantic information trading for semantic text transmission.

\subsection{Deep Learning Enabled Semantic Communication Systems}
\label{dlsemcom}
We consider the $M$ devices perform text data transmission with DL enabled semantic communication systems, e.g., voice controlled devices (Google Nest Hub \footnote{https://www.cnet.com/home/smart-home/how-to-set-up-your-new-google-nest-hub-or-nest-hub-max/}, Amazon Echo \footnote{https://www.androidauthority.com/amazon-echo-5th-gen-3095027}, and Apple HomePod \footnote{https://www.apple.com/sg/newsroom/2021/10/apple-introduces-homepod-mini-in-new-bold-and-expressive-colors/}). In DL enabled semantic communication system, collected sentences, $\mathbf{S}=[s_1,s_2,\ldots,s_{N_s}]$, are encoded by semantic encoder and channel encoder. The encoded signal can be represented by
\begin{equation}
    \mathbf{X} = enc_c(enc_s(\mathbf{S})) \label{enc}, 
\end{equation}
where $\mathbf{X} \in \mathbb{R}^{N_s \times L \times D}$, $N_s$ is the number of sentences, $L$ is the sentence length, $D$ is the output dimension of channel encoder, $enc_c(\cdot)$ is the channel encoder, and $enc_s(\cdot)$ is the semantic encoder. Note that all inputs are padded to length $L$ before passing to the encoders. 
After winner determination of the double auction, winning devices transmit encoded information to the winning buyers. At the buyer, signal received can be expressed as 
\begin{equation}
    \mathbf{Y} = \mathbf{HX}+\mathbf{A},
\end{equation}
where $\mathbf{H}$ is the channel gain between the transmitter and receiver and $\mathbf{A} \sim \mathcal{N}(0,\sigma^2_n)$ is the additive white Gaussian noise (AWGN). The decoded sentences are given by 
\begin{equation}
    \widehat{\mathbf{S}} = dec_s(dec_c(\mathbf{Y})),
\end{equation}
where $dec_s(\cdot)$ and $dec_c(\cdot)$ are the semantic decoder and channel decoder of the receiver.

We adopt the network architecture of DeepSC \cite{xie2021deep} where the semantic encoder and decoder are implemented as multiple Transformer \cite{vaswani2017attention} encode and decode layers, and channel encoder as dense layers with different units. Our incentive mechanism can be easily extended to other network architectures by following the same evaluation procedure.

The BLEU score and the sentence similarity are two of the critical performance metrics of the text-based semantic communication system. The BLEU score measures an exact matching of words in the original and recovered sentences without considering their semantic information. In contrast to the BLEU score, the sentence similarity is calculated by the cosine similarity of the extracted semantic features from original and recovered sentences. In our model, a pre-trained Bidirectional Encoder Representations from Transformers (BERT) \cite{devlin-etal-2019-bert} model is used for the semantic features extraction. Let $\mathbf{s}$ and $\widehat{\mathbf{s}}$ denote one sentence from $\mathbf{S}$ and $\widehat{\mathbf{S}}$, respectively. The BLEU score can be expessed as 
\begin{equation}
    \log \text{BLEU} = \min \left(1-\frac{l_{\widehat{\mathbf{s}}}}{l_{\mathbf{s}}},0\right) + \displaystyle\sum_{i=1}^{I} u_i \log p_i,
\end{equation}
where $l_{\mathbf{s}}$ and $l_{\widehat{\mathbf{s}}}$ are the lengths of the original and recovered sentences respectively, $u_i$ is the weight of $i$-grams, and $p_i$ is the $i$-grams score, which is given by 
\begin{equation}
    p_i = \frac{\displaystyle\sum_{k=1}^{K_i} \min(C_k (\widehat{\mathbf{s}}), C_k(\mathbf{s}))}{\displaystyle\sum_{k=1}^{K_i} \min(C_k(\widehat{\mathbf{s}}))},
\end{equation}
where $K_i$ is the number of elements in $i$-th grams, and $C_k (\cdot)$ is the frequency count function for the $k$-th element in $i$-th grams.
The sentence similarity is given by 
\begin{equation}
    similarity(\widehat{\mathbf{s}},\mathbf{s}) = \frac{\mathbf{B}(\mathbf{s}) \cdot \mathbf{B}(\widehat{\mathbf{s}})^T}{\lVert\mathbf{B}(\mathbf{s})\rVert \lVert\mathbf{B}(\widehat{\mathbf{s}})\rVert},
\end{equation}
where $\mathbf{B}(\cdot)$ is a pre-trained BERT model used to measure the sentence similarity. 

In general, to obtain a higher BLEU score and similarity score, we need to increase the output dimension $D$ of the encoder \cite{devlin-etal-2019-bert}. 
However, increasing $D$ comes at the cost of a larger data size, and the amount of data that devices can send is limited by the communication resources, e.g., energy supply to the devices \cite{liew2022economics}. Specifically,  the BLEU score and the similarity score of device $m$ can be expressed as 
\begin{equation}
\label{simscore}
    s_m = f_{sim}(D) = f_{sim}(\frac{\mathbb{N}_m}{N_s\times L \times b_f}),
\end{equation}
and
\begin{equation}
\label{bleuscore}
    BLEU_m = f_{BLEU}(D) = f_{BLEU}(\frac{\mathbb{N}_m}{N_s\times L \times b_f}),
\end{equation}
respectively, where  $f_{sim}(\cdot)$ and $f_{BLEU}(\cdot)$ are simple lookup to obtain the scores of the model, $b_f$ is the number of bits used by a unit feature, and $\mathbb{N}_m$ is the total number of bits that the device $m$ can transmit. The values of $f_{sim}(\cdot)$ and $f_{BLEU}(\cdot)$ can be obtained by using different output dimension $D$ to evaluate the similarity score and the BLEU score, respectively. A unit feature is a single entry of $\mathbf{X} \in \mathbb{R}^{N_s \times L \times D}$, and $b_f$ is the number of bits used to represent a float type data. In our model, the data size in Equations \eqref{dcost} and \eqref{cpcost} is given by the number of words collected, i.e., $d=N_s\times L$. The total number of bits affects the communication cost as shown in Equation \eqref{cmcost}.

From Equations \eqref{simscore} and \eqref{bleuscore}, it is clear that the scores are affected by the size of data and model performance. Since each device has a different model performance and data to be sent, the similarity score and the BLEU score are different among the devices. 

\subsection{Semantic-Aware Valuation for Auctions}

In the semantic model trading, the devices bid according to the performance of the semantic model (Equation \eqref{modalval}), i.e., 
\begin{equation}
    \mathfrak{b}_m = v_m = A_{p} - A_m.
\end{equation}
The accuracy of the model from the model provider can be expressed as follows:
\begin{equation}
    A_{p} = \lambda_m s_{p} + \beta_m BLEU_{p},
\end{equation}
where $s_{p}$ and $BLEU_{p}$ are the similarity score and the BLEU score achievable by the model provided, respectively, $\lambda_m$ is the preference for the similarity score by the device $m$, $\beta_m$ is the preference for the BLEU score by the device $m$, and $\lambda_m + \beta_m=1$. If $\beta_m > \lambda_m$, it indicates that the device has more interest in the exact recovery of words whereas $\lambda_m > \beta_m$ indicates higher interest in the matching of the semantic meaning. For example, some medical devices \cite{dhyani2021intelligent} would have higher $\beta_m$ because the exact recovery of medical terms is more important, whereas devices that collect data for text classification \cite{shah2016review} would have higher $\lambda_m$.

The accuracy of the current model of device $m$ is given by:
\begin{equation}
    A_{m} = \lambda_m s_{m} + \beta_m BLEU_{m},
\end{equation}
where $s_{m}$ and $BLEU_{m}$ are the similarity score and the BLEU score achievable by the current model.
In the semantic information trading, based on the communication environment and resources, each device can achieve different semantic performance when transmitting information to the buyers. Therefore, based on the semantic performance, each buyer has different preferences for the devices. 

The value of the semantic information from device $m$ to buyer $n$ is given by:
\begin{equation}
v_n^m = \lambda_n s_m + \beta_n BLEU_m ,
\end{equation}
where $\lambda_n$ is the preference for the similarity score, and $\beta_n$ is the preference for the BLEU score by the buyer $n$. As the auction is truthful for all buyers and sellers, the buyers and sellers submit bids and asks according to their true valuations, i.e., $b_n^m = v_n^m$ and $a_m = C_m$. Again, the cost of collecting the information by device $m$ can be obtained from Equation \eqref{cost}. 

\subsection{Feature Reduction Technique}
\label{featreduce}
Let $\mathbb{N}_m$ denote the number of bits that device $m$ can send to the buyer. 
Based on the bit budget $\mathbb{N}_m$, not all features of the encoded information can be sent. 
However, the semantic communication model is trained with a fixed number of features with output dimension $D$. A sample of feature representation output by semantic encoder with 16 features  is shown in Fig. \ref{fig:textfeat}.
Sentences decoded from partial features have a lower similarity score and BLEU score than that decoded from all features.
Deep neural networks need to fine-tune the model parameters to reduce the gap in performance. 
Unfortunately, devices that operate on limited resource might not be able to fine-tune the model in real-time because it is both time and energy consuming.
Therefore an effective feature reduction method is required for these devices to minimize the gap in performance when they have to communicate with a limited bit budget.

\begin{figure*}[htb]
  \centering
  \centerline{\includegraphics[width=18cm]{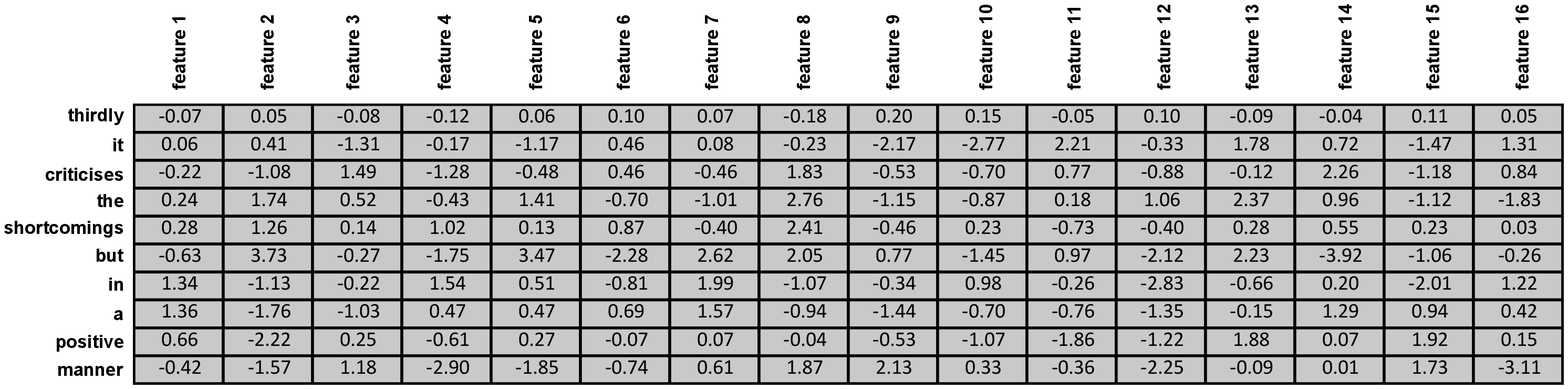}}
\caption{Sample output of semantic encoder with 16 features, input sentence: ``thirdly it criticises the shortcomings but in a positive manner".}
\label{fig:textfeat}
\end{figure*}

\begin{figure*}[htb]
  \centering
  \centerline{\includegraphics[width=16cm]{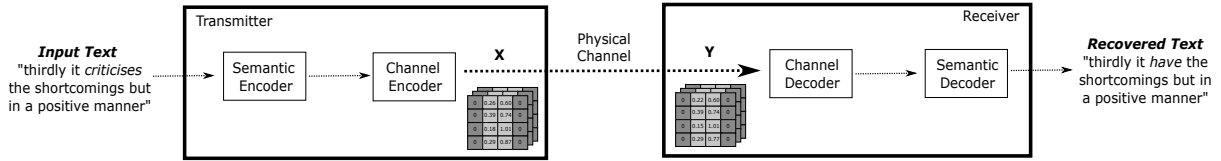}}
\caption{Illustration of controlled dropout during the training.}
\label{fig:cdroptrain}
\end{figure*}

\begin{figure}[t]

\begin{minipage}[b]{.48\linewidth}
  \centering
  \centering
  \centerline{\includegraphics[width=3.5cm]{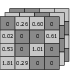}}
  
  \centerline{(a)}\medskip
\label{fig:convendrop}


\end{minipage}
\hfill
\begin{minipage}[b]{0.48\linewidth}
  \centering
\centering
  \centerline{\includegraphics[width=3.5cm]{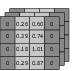}}
    \centerline{(b)}\medskip

\label{fig:rev}
\end{minipage}
\caption{An illustration of (a) conventional dropout and (b) controlled dropout.}
\label{fig:dropout}
\end{figure}

We propose a simple feature reduction method where the performance can be adjusted by a regularization technique \cite{ko2017controlled} during the training of the model. Consider that the model on device is pre-trained with output dimension $D$, under the limited bit budget, the encoded signal, $\mathbf{X} \in \mathbb{R}^{N_s \times L \times D}$ is reduced to $\mathbf{X}' \in \mathbb{R}^{N_s \times L \times D'}$, where $0<D'<D$. At the receiver, the received signal $\mathbf{Y}'\in \mathbb{R}^{N_s \times L \times D'}$ is padded with zeros to become $\mathbf{Y}\in \mathbb{R}^{N_s \times L \times D}$. The proposed data reduction method is illustrated in Fig. \ref{fig:cdroptrain}. To obtain $f_{sim}(\cdot)$ and $f_{BLEU}(\cdot)$, we first train the DeepSC model with the data with dimension $D$ and use the trained model to evaluate the similarity scores for output dimension $d$, $\forall d \in [1,D]$. Then, we can obtain $f_{sim}(\cdot)$ and $f_{BLEU}(\cdot)$ from the evaluation results of test datasets.

To reduce the degradation of performance, we add a controlled dropout \cite{ko2017controlled} layer before the channel decoding layer of the receiver. For example, if index $d_i$ is selected by controlled dropout, all units from (0, 0, $d_i$) to ($N_s-1$, $L-1$, $d_i$) become zeros. The conventional dropout \cite{srivastava2014dropout} technique randomly drops units (Fig. \ref{fig:dropout}(a)) in the training process to solve the overfitting issue of the deep neural networks. In contrast to conventional dropout, controlled dropout drops units intentionally, i.e., dropping a selected index of a dimension, as shown in Fig. \ref{fig:dropout}(b). An illustration of the effect of controlled dropout during training is shown in Fig. \ref{fig:cdroptrain}. In our experiments, we drop units from a certain index of the output dimension. As shown in \cite{ko2017controlled}, we can obtain a better performance than conventional dropout when the index is randomly selected. Following \cite{ko2017controlled}, the index is randomly selected with a dropout rate, $p_{drop}$, $0 < p_{drop} < 1$. Controlled dropout helps the model to generalize to the reduced features during the training.

\section{Numerical Results}\label{sec:results}
In this section, we evaluate the performance of the proposed auction mechanisms and feature reduction method. The values of experiment parameters are presented in Table \ref{tab:param}. The similarity and BLEU scores are sampled according to the simulation settings in \cite{lim2021decentralized}, \cite{luong2018optimal} for the DL-based auction. The dropout rate is set according to \cite{ko2017controlled}. Following \cite{jiao2020toward}, \cite{chen2022performance}, we set the cost-related parameters in the double auction as shown in Table \ref{tab:param}.
\begin{table}
\begin{center}
\caption{Experiment Parameters \cite{lim2021decentralized}, \cite{jiao2020toward}, \cite{ko2017controlled}, \cite{luong2018optimal},   \cite{chen2022performance}}

\begin{tabular}{ |l|l|c|c| } 

\hline
Parameters & Values \\
\hline

Similarity score coefficient, $\lambda_n$ & $\sim \mathcal{U}[0,1] $\\ 
BLEU score coefficient, $\beta_n$ & $1-\lambda_n$ \\
 
Dropout rate, $p_{drop}$ & 0.1 \\
Reduced output dimension, $D'$ & $\sim \mathcal{U}[1,16] $ \\
Data size, $d_m$ & $\sim \mathcal{U}[10,100]$ \\
Unit data cost, $\gamma_m$ & $0.001$ \\
Unit computational cost, $\Gamma_m$ & $0.001$ \\
Communication power, $P_m$ & $1$ \\
Number of bits transmitted, $\mathbf{N}_m$ & $10000$ \\
Transmission rate, $R$ & $100000$ \\
Unit energy cost, $\nu_m$ & $0.01$ \\
Expected number of transmissions, $T_m$ & $100$ \\ 
\hline
\end{tabular}

\label{tab:param}
\end{center}
\end{table}
\subsection{Evaluation of DeepSC with Feature Reduction}
We first investigate the improvement of semantic performance under the proposed feature reduction method. With the help of the DeepSC, we set the output dimension of encoder $D$ to 16, and train the model under AWGN channel for 200 epochs. The training and test data is obtained from the proceedings of the European Parliament \cite{koehn2005europarl}. We use $7347$ English sentences in the dataset for our evaluation, and use the rest of the English sentences for training. The performance scores are considered for the evaluation of the proposed double auction mechanism.

As described in Section \ref{featreduce}, we add a controlled dropout layer between the physical layer and receiver. The dropout probability is set as $0.1$ which means $10\%$ of the features are dropped randomly in a controlled setting. We record the similarity score and the BLEU score for the output dimensions from 1 to 16, which is shown in  Fig. \ref{fig:odvsscore}. Regardless of the application of controlled dropout, we observe that the performance degrades as the output dimension decreases. The reason is that fewer features are transmitted.
However, when the output dimension changes from $1$ to $15$, the performance of model trained with controlled dropout outperforms constantly the baseline mode. In other words, as the output dimension decreases, the baseline model has a larger performance gap compared to the model with controlled dropout. Specifically, the reduction of the similarity score per output dimension is 0.05 in the baseline model and 0.04 in the proposed model. For the reduction of the BLEU score per output dimension, it is 0.06 for the baseline model and 0.05 for the proposed model. This result shows that the proposed model can maintain a similarity score of 0.80 even after 25\% of feature reduction ($D=12$) whereas the baseline model can only achieve the similarity score of 0.60 with the same output dimension. As shown in Table \ref{tab:sent}, the recovered sentence has higher similarity when the controlled dropout is applied.

However, we notice that the best performance achieved by the baseline model at $D=16$ is slightly higher than that of the proposed model. The BLEU score and the similarity score for our proposed model are 0.89 and 0.91, respectively, but both scores are 0.94 for the baseline model. The reason is that the accuracy is slightly dropped due to the generalization of the feature reduction. Overall, the gap in performance at fewer output dimensions is compensated by the controlled dropout during training. 

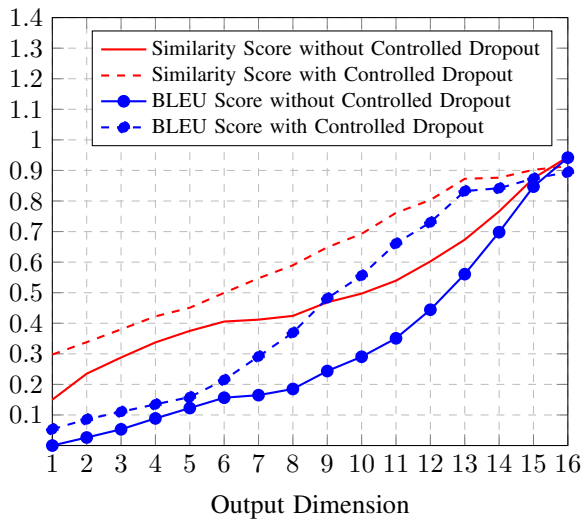
\begin{figure}[t]
\centering
\begin{tikzpicture}

\begin{axis}[
    xlabel={Output Dimension},
    xmin=1, xmax=16,
    ymin=0, ymax=1.4,
    xtick={1,2,3,4,5,6,7,8,9,10,11,12,13,14,15,16},
    ytick={0.1,0.2,0.3,0.4,0.5,0.6,0.7,0.8,0.9,1.0,1.1,1.2,1.3,1.4},
    legend pos=north east,
    legend style={fill=white, nodes={scale=0.8, transform shape}},
    ymajorgrids=true,
    xmajorgrids=true,
    grid style=dashed,
    legend cell align={left},
    every axis plot/.append style={thick},
    /pgf/number format/.cd,
    1000 sep={},
]
\pgfplotstableread{simscore_snr9.txt}\simscore;
\addplot[
    color=red,
    ]
    table
    [
    x expr=\thisrowno{0},
    y expr=\thisrowno{1}
    ] {\simscore};
    \addlegendentry{Similarity Score without Controlled Dropout};
    
\addplot[
    color=red,
    dashed,
    ]
    table
    [
    x expr=\thisrowno{0},
    y expr=\thisrowno{2}
    ] {\simscore};
    \addlegendentry{Similarity Score with Controlled Dropout};
    
\addplot[
    color=blue,
    mark=*,
    ]
    table
    [
    x expr=\thisrowno{0},
    y expr=\thisrowno{3}
    ] {\simscore};
    \addlegendentry{BLEU Score without Controlled Dropout};
    
\addplot[
    color=blue,
    mark=*,
    dashed,
    ]
    table
    [
    x expr=\thisrowno{0},
    y expr=\thisrowno{4}
    ] {\simscore};
    \addlegendentry{BLEU Score with Controlled Dropout};
    
\end{axis}
\end{tikzpicture}
\caption{BLEU and similarity score under different output dimensions.}
\label{fig:odvsscore}
\end{figure}

\begin{table*}
\begin{center}
\caption{Sample Sentences with and without Controlled Dropout}

\begin{tabular}{ |l|l|c|c| } 

\hline
\textbf{Original Sentence} & thirdly it criticises the shortcomings but in a positive manner \\
\hline
\textbf{Output Dimension = 15, with controlled dropout}  & thirdly it have the shortcomings but in a positive manner \\
\hline
\textbf{Output Dimension = 14, with controlled dropout} & thirdly it have the shortcomings but in a positive manner \\
\hline
\textbf{Output Dimension = 15, without controlled dropout} & thirdly it forward the shortcomings but in a positive manner\\
\hline
\textbf{Output Dimension = 14, without controlled dropout} & thirdly it played the shortcomings but in a off manner \\
\hline
\end{tabular}

\label{tab:sent}
\end{center}
\end{table*}

\subsection{Evaluation of Deep Learning based Auction Mechanism}
Without loss of generality, we consider that the devices do not own any semantic model initially, i.e., $A_{m} = 0$ and $\mathfrak{b}_m = v_m = A_{p}$. To obtain the bid profiles, we consider $A_{p} \sim U[0,0.4]$ and $\sim U[0.5,0.9]$. We collect 1000 training samples with 10 bidders (devices) in each of the samples and perform training for 500 epochs. From Fig. \ref{fig:revmodprovider}, we observe that the DL-based auction can always achieve higher revenue than that of the SPA, regardless of the values of $A_{p}$. The reason is that the DL-based auction mechanism can adapt to different bid profiles by optimizing the parameters in the DL network. Moreover, we observe that, while SPA is incentive compatible, it does not maximize the revenue of the model providers. In contrast to SPA, the DL-based auction maximizes the revenue of model providers while keeping the desired properties of incentive compatibility and individual rationality, which helps to attract more model providers to offer quality semantic encoder/decoder for semantic communications.

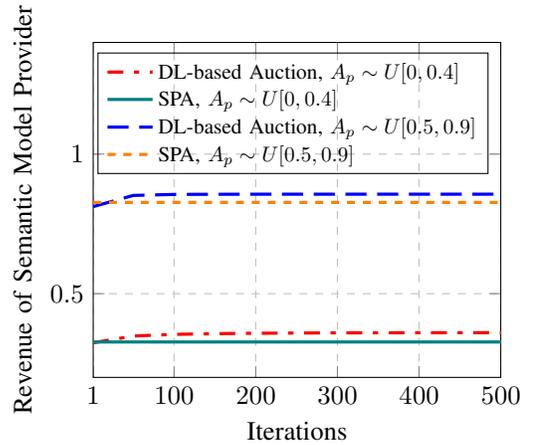
\begin{figure}
\centering
\begin{tikzpicture}
\pgfplotsset{
    width=7cm, compat=1.3
}
\begin{axis}[
    xlabel={Iterations},
    ylabel={Revenue of Semantic Model Provider},
    xmin=1, xmax=500,
    ymin=0.2, ymax=1.4,
    xtick={1,100,200,300,400,500},
    legend pos=north east,
    legend style={fill=none, nodes={scale=0.8, transform shape}},
    ymajorgrids=true,
    xmajorgrids=true,
    grid style=dashed,
    legend cell align={left},
    every axis plot/.append style={very thick},
    /pgf/number format/.cd,
    1000 sep={},
]
\pgfplotstableread{auction_res_uni_d50.txt}\resauction;
\addplot[
    color=red,
    dash pattern=on 6pt off 4pt on 2pt off 4pt,
    ]
    table
    [
    x expr=\thisrowno{0},
    y expr=\thisrowno{1}
    ] {\resauction};
    \addlegendentry{DL-based Auction, $A_{p}\sim U[0,0.4]$};
    
\addplot[
    color=teal,
    ]
    table
    [
    x expr=\thisrowno{0},
    y expr=\thisrowno{2}
    ] {\resauction};
    \addlegendentry{SPA, $A_{p}\sim U[0,0.4]$};
    
\addplot[
    color=blue,
    dashed,
    dash pattern= on 8pt off 4pt,
    ]
    table
    [
    x expr=\thisrowno{0},
    y expr=\thisrowno{3}
    ] {\resauction};
    \addlegendentry{DL-based Auction, $A_{p}\sim U[0.5,0.9]$};
    
\addplot[
    color=orange,
    dashed,
    ]
    table
    [
    x expr=\thisrowno{0},
    y expr=\thisrowno{4}
    ] {\resauction};
    \addlegendentry{SPA, $A_{p}\sim U[0.5,0.9]$};
    
\end{axis}
\end{tikzpicture}
\caption{Revenue of semantic model providers.}
\label{fig:revmodprovider}
\end{figure}

\subsection{Evaluation of Double Auction Mechanism}
To evaluate the performance of the double auction mechanism, we generate 1000 samples and average the simulation results. 
We set the number of sellers to $M=20$ and evaluate the performance under different number of buyers.
Note that in the following discussion, we refer semantic information buyers as buyers and devices as sellers for simplicity.

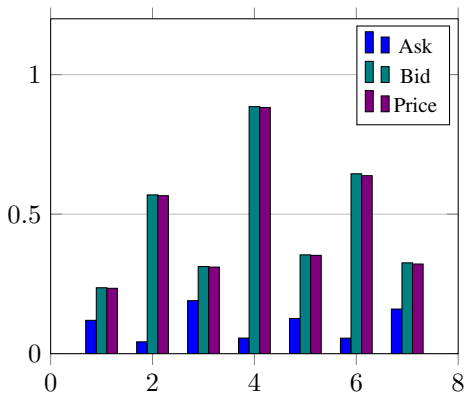
\begin{figure}
\centering
\begin{tikzpicture}
\pgfplotsset{
    width=7cm, compat=1.3
}
\begin{axis} [
    ybar=0pt,
    bar width=4pt,
    xmin=0,xmax=8,
    ymin=0,ymax=1.2,
    legend style={fill=white, nodes={scale=0.8, transform shape}},
    ymajorgrids=true]

\pgfplotstableread{n100_res.txt}\truthful;
\addplot[
    fill=blue,
    ]
    table
    [
    x expr=\thisrowno{0},
    y expr=\thisrowno{1}
    ] {\truthful};
    \addlegendentry{Ask};
    
\addplot[
    fill=teal,
    ]
    table
    [
    x expr=\thisrowno{0},
    y expr=\thisrowno{2}
    ] {\truthful};
    \addlegendentry{Bid};
    
\addplot[
    fill=violet,
    ]
    table
    [
    x expr=\thisrowno{0},
    y expr=\thisrowno{3}
    ] {\truthful};
    \addlegendentry{Price};

\end{axis}

\end{tikzpicture}

\caption{Ask, bid, and price of winning seller-buyer pairs.}
\label{bidaskprice}
\end{figure}


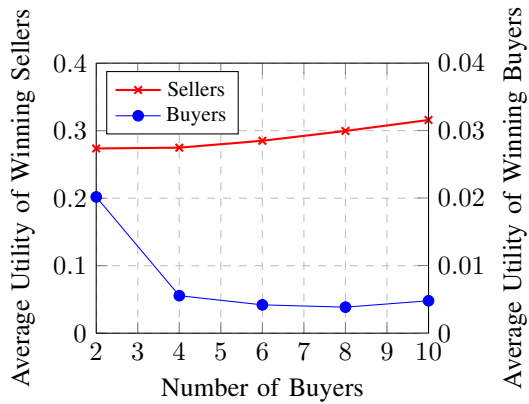
\begin{figure}
\centering
\pgfplotsset{scaled y ticks=false}
 \begin{tikzpicture}
\pgfplotsset{
    xmin=2, xmax=10,
    width=6cm, compat=1.3
}

\begin{axis}[
  axis y line*=left,
  ymin=0, ymax=0.4,
  xlabel=Number of Buyers,
  ylabel=Average Utility of Winning Sellers,
  xtick={2,3,4,5,6,7,8,9,10},
  ytick={0,0.1,0.2,0.3,0.4},
  ymajorgrids=true,
  xmajorgrids=true,
  grid style=dashed,
  legend cell align={left},
  every axis plot/.append style={thick},
  /pgf/number format/.cd,
  1000 sep={}
]
\addplot[mark=x,red]
  coordinates{
    (2,0.273609416)
    (4,0.274812524)
    (6,0.284980794)
    (8,0.29956794)
    (10,0.315792566)
}; \label{seller}

\end{axis}

\begin{axis}[
  axis y line*=right,
  yticklabel style={
        /pgf/number format/fixed,
        /pgf/number format/precision=2
},
  ytick={0,0.01,0.02,0.03,0.04},
  legend pos=north west,
  legend style={fill=white, nodes={scale=0.8, transform shape}},
  axis x line=none,
  ymin=0, ymax=0.04,
  ylabel=Average Utility of Winning Buyers,
]
\addlegendimage{/pgfplots/refstyle=seller}\addlegendentry{Sellers}
\addplot[mark=*,blue]
  coordinates{
    (2,0.020157724)
    (4,0.005530165)
    (6,0.004171116)
    (8,0.003831985)
    (10,0.004775161)
}; \addlegendentry{Buyers}
\end{axis}

\end{tikzpicture}
\caption{Average utility of winning buyers and sellers.}
\label{fig:totalutil}
\end{figure}


\begin{figure}[t]
\centering
\pgfplotsset{scaled y ticks=false}
 \begin{tikzpicture}
\pgfplotsset{
    xmin=2, xmax=10,
    width=7cm, compat=1.3
}

\begin{axis}[
  ymin=0, ymax=0.6,
  xlabel=Number of Buyers,
  ylabel=Average Utility of Winning Sellers,
  xtick={2,3,4,5,6,7,8,9,10},
  ytick={0,0.1,0.2,0.3,0.4,0.5,0.6},
  ymajorgrids=true,
  xmajorgrids=true,
  grid style=dashed,
  legend cell align={left},
  legend style={fill=white, nodes={scale=0.8, transform shape}},
  every axis plot/.append style={thick},
  /pgf/number format/.cd,
  1000 sep={}
]
\addplot[mark=x,red]
  coordinates{
    (2,0.273609416)
    (4,0.274812524)
    (6,0.284980794)
    (8,0.29956794)
    (10,0.315792566)
}; \addlegendentry{with Deep Learning}
\addplot[mark=*,blue]
  coordinates{
    (2,0.244814444)
    (4,0.239604142)
    (6,0.251322238)
    (8,0.21908414)
    (10,0.274147555)
}
; \addlegendentry{baseline}

\end{axis}

\end{tikzpicture}
\caption{Average utility of winning sellers with and without deep learning (baseline) in the double auction mechanism.}
\label{utildl}
\end{figure}
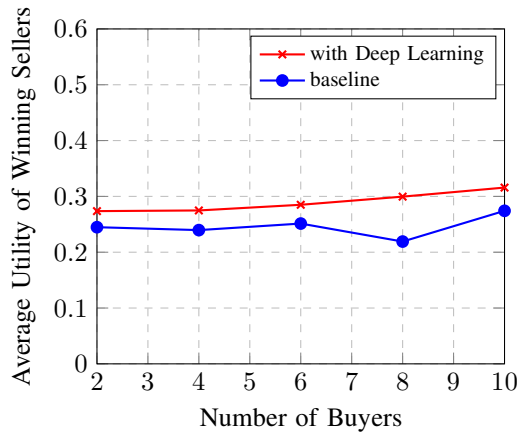

\begin{figure}[t]
\centering
\pgfplotsset{scaled y ticks=false}
 \begin{tikzpicture}
\pgfplotsset{
    xmin=2, xmax=10,
    width=7cm, compat=1.3
}

\begin{axis}[
  ymin=0, ymax=0.7,
  xlabel=Number of Buyers,
  ylabel=Average Utility of Winning Sellers,
  xtick={2,3,4,5,6,7,8,9,10},
  ytick={0,0.1,0.2,0.3,0.4,0.5,0.6,0.7},
  ymajorgrids=true,
  xmajorgrids=true,
  grid style=dashed,
  legend cell align={left},
  legend style={fill=white, nodes={scale=0.8, transform shape}},
  every axis plot/.append style={thick},
  /pgf/number format/.cd,
  1000 sep={}
]
\addplot[mark=x,red]
  coordinates{
    (2,0.296050726)
    (4,0.256871025)
    (6,0.308167955)
    (8,0.337936519)
    (10,0.364629871)
}; \addlegendentry{with Deep Learning, $\theta_m \geq 0.5$}

\addplot[dashed,red]
  coordinates{
    (2,0.100113574)
    (4,0.217876129)
    (6,0.152168025)
    (8,0.15119952)
    (10,0.154217376)
}
; \addlegendentry{with Deep Learning, $\theta_m < 0.5$}

\addplot[mark=*,blue]
  coordinates{
    (2,0.252222063)
    (4,0.220542363)
    (6,0.268727023)
    (8,0.248645835)
    (10,0.315370008)
}
; \addlegendentry{without Deep Learning, $\theta_m \geq 0.5$}

\addplot[mark=*,blue,dashed]
  coordinates{
    (2,0.039493662)
    (4,0.135004628)
    (6,0.156426408)
    (8,0.124685486)
    (10,0.117285937)
}
; \addlegendentry{without Deep Learning, $\theta_m < 0.5$}

\end{axis}

\end{tikzpicture}
\caption{Average utility of winning sellers with different ranges of $\theta_m$.}
\label{utiltheta}
\end{figure}
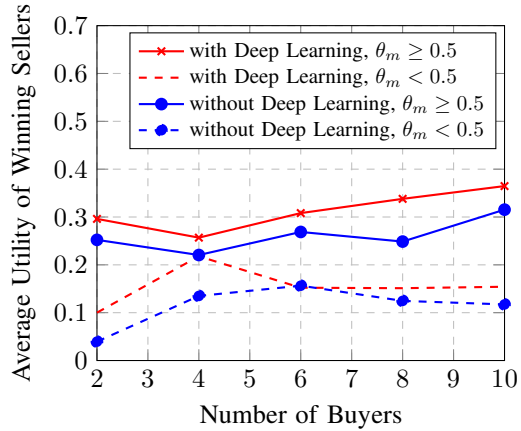

\begin{figure}[t]
\subfigure[]{
\centering
\pgfplotsset{scaled y ticks=false}
 \begin{tikzpicture}
\pgfplotsset{
    xmin=2, xmax=10,
    width=4cm, compat=1.17
}

\begin{axis}[
  label style={font=\scriptsize},
  tick label style={font=\tiny},
  yticklabel style={
        /pgf/number format/fixed,
        /pgf/number format/precision=2},
  ymin=0, ymax=0.6,
  xlabel=Number of Buyers,
  ylabel=Similarity Score,
  xtick={2,3,4,5,6,7,8,9,10},
  ytick={0,0.1,0.2,0.3,0.4,0.5,0.6},
  ymajorgrids=true,
  xmajorgrids=true,
  grid style=dashed,
  legend cell align={left},
  legend style={fill=white, nodes={scale=0.5, transform shape}},
  every axis plot/.append style={thick},
  /pgf/number format/.cd,
  1000 sep={}
]
\addplot[mark=x,red]
  coordinates{
    (2,0.31661977563455185)
    (4,0.37063285257855044)
    (6,0.326431110125172)
    (8,0.3337682992794319)
    (10,0.3457011779589282)
}; \addlegendentry{$\theta_m \geq 0.5$}
\addplot[mark=*,blue]
  coordinates{
    (2,0.1669973423216254)
    (4,0.1426884590199269)
    (6,0.1688055098946527)
    (8,0.1443021703427639)
    (10,0.1588024486502578)
}
; \addlegendentry{$\theta_m < 0.5$}
\end{axis}
\end{tikzpicture}
} 
\subfigure[]{
\centering
\pgfplotsset{scaled y ticks=false}
 \begin{tikzpicture}
\pgfplotsset{
    xmin=2, xmax=10,
    width=4cm, compat=1.17
}

\begin{axis}[
  label style={font=\scriptsize},
  tick label style={font=\tiny},
  yticklabel style={
        /pgf/number format/fixed,
        /pgf/number format/precision=2},
  ymin=0, ymax=0.6,
  xlabel=Number of Buyers,
  ylabel=BLEU Score,
  xtick={2,3,4,5,6,7,8,9,10},
  ytick={0,0.1,0.2,0.3,0.4,0.5,0.6},
  ymajorgrids=true,
  xmajorgrids=true,
  grid style=dashed,
  legend cell align={left},
  legend style={fill=white, nodes={scale=0.5, transform shape}},
  every axis plot/.append style={thick},
  /pgf/number format/.cd,
  1000 sep={}
]
\addplot[mark=x,red]
  coordinates{
    (2,0.32052125495177236)
    (4,0.3791832170253799)
    (6,0.33069722450874767)
    (8,0.33690080743964856)
    (10,0.3474632178279614)
}; \addlegendentry{$\theta_m \geq 0.5$}
\addplot[mark=*,blue]
  coordinates{
    (2,0.16595875755783013)
    (4,0.14013916301765394)
    (6,0.1683687743248769)
    (8,0.14187539916126404)
    (10,0.15628916855205552)
}
; \addlegendentry{$\theta_m < 0.5$}

\end{axis}

\end{tikzpicture}
}
\caption{(a) Average similarity score of sellers (b) Average BLEU score of sellers with different cost of the semantic model, $\theta_m$.}
\label{fig:bleusim}
\end{figure}
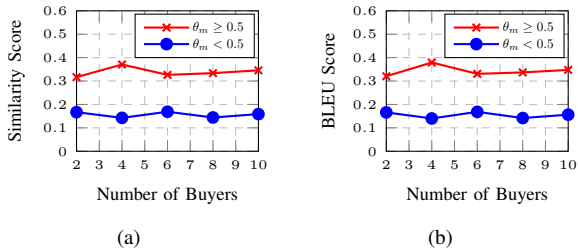

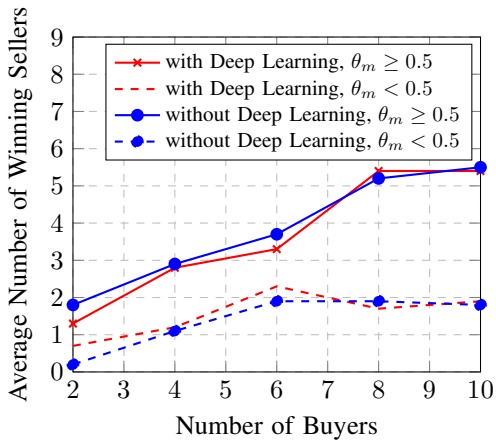
\begin{figure}[t]
\centering
\pgfplotsset{scaled y ticks=false}
 \begin{tikzpicture}
\pgfplotsset{
    xmin=2, xmax=10,
    width=7cm, compat=1.3
}

\begin{axis}[
  ymin=0, ymax=9,
  xlabel=Number of Buyers,
  ylabel=Average Number of Winning Sellers,
  xtick={2,3,4,5,6,7,8,9,10},
  ytick={0,1,2,3,4,5,6,7,8,9},
  ymajorgrids=true,
  xmajorgrids=true,
  grid style=dashed,
  legend cell align={left},
  legend style={fill=white, nodes={scale=0.8, transform shape}},
  every axis plot/.append style={thick},
  /pgf/number format/.cd,
  1000 sep={}
]
\addplot[mark=x,red]
  coordinates{
    (2,1.3)
    (4,2.8)
    (6,3.3)
    (8,5.4)
    (10,5.4)
}; \addlegendentry{with Deep Learning, $\theta_m \geq 0.5$}

\addplot[dashed,red]
  coordinates{
    (2,0.7)
    (4,1.2)
    (6,2.3)
    (8,1.7)
    (10,1.9)
}
; \addlegendentry{with Deep Learning, $\theta_m < 0.5$}

\addplot[mark=*,blue]
  coordinates{
    (2,1.8)
    (4,2.9)
    (6,3.7)
    (8,5.2)
    (10,5.5)
}
; \addlegendentry{without Deep Learning, $\theta_m \geq 0.5$}

\addplot[mark=*,blue,dashed]
  coordinates{
    (2,0.2)
    (4,1.1)
    (6,1.9)
    (8,1.9)
    (10,1.8)
}
; \addlegendentry{without Deep Learning, $\theta_m < 0.5$}

\end{axis}

\end{tikzpicture}
\caption{Average number of winning sellers with and without deep learning in the double auction mechanism.}
\label{fig:noseller}
\end{figure}
\begin{figure}[t]
\subfigure[]{
\centering
\pgfplotsset{scaled y ticks=false}
 \begin{tikzpicture}
\pgfplotsset{
    xmin=0.1, xmax=1,
    width=4cm, compat=1.17
}

\begin{axis}[
  label style={font=\scriptsize},
  tick label style={font=\tiny},
  yticklabel style={
        /pgf/number format/fixed,
        /pgf/number format/precision=2},
  ymin=-0.2, ymax=0.2,
  xlabel=Ask,
  ylabel=Utility,
  xtick={0.2,0.4,0.6,0.8,1.0},
  ytick={-0.2,-0.1,0,0.1,0.2},
  ymajorgrids=true,
  xmajorgrids=true,
  grid style=dashed,
  legend cell align={left},
  legend style={fill=white, nodes={scale=0.5, transform shape}},
  every axis plot/.append style={thick},
  /pgf/number format/.cd,
  1000 sep={}
]
\addplot[mark=x,blue,dashed]
  coordinates{
    (0.1,0.12)
    (0.19,0.12)
    (0.2,0.12)
    (0.3,0)
    (0.4,0)
    (0.5,0)
    (0.6,0)
    (0.7,0)
    (0.8,0)
    (0.9,0)
    (1.0,0)
}; 
\end{axis}
\end{tikzpicture}
} 
\subfigure[]{
\centering
\pgfplotsset{scaled y ticks=false}
 \begin{tikzpicture}
\pgfplotsset{
    xmin=0.1, xmax=1,
    width=4cm, compat=1.17
}

\begin{axis}[
  label style={font=\scriptsize},
  tick label style={font=\tiny},
  yticklabel style={
        /pgf/number format/fixed,
        /pgf/number format/precision=2},
  ymin=-0.2, ymax=0.2,
  xlabel=Ask,
  ylabel=Utility,
  xtick={0.2,0.4,0.6,0.8,1.0},
  ytick={-0.2,-0.1,0,0.1,0.2},
  ymajorgrids=true,
  xmajorgrids=true,
  grid style=dashed,
  legend cell align={left},
  legend style={fill=white, nodes={scale=0.8, transform shape}},
  every axis plot/.append style={thick},
  /pgf/number format/.cd,
  1000 sep={}
]
\addplot[mark=x,blue,dashed]
  coordinates{
    (0.1,0)
    (0.18,0)
    (0.2,0)
    (0.3,0)
    (0.4,0)
    (0.5,0)
    (0.6,0)
    (0.7,0)
    (0.8,0)
    (0.9,0)
    (1.0,0)
}; 

\end{axis}

\end{tikzpicture}
}
\caption{Utility when asking untruthfully by (a) seller $m$ that wins the auction, $m \in \mathcal{M}_w$ (b) seller $m$ that loses the auction, $m \notin \mathcal{M}_w$.}
\label{fig:truthseller}
\end{figure}
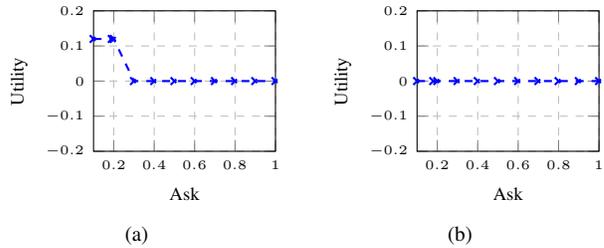

To validate that the double auction mechanism is individually rational and budget balanced, we record the values of ask, bid, and price in one of the samples with $M=20$ and $N=10$. The values are shown in Fig. \ref{bidaskprice}. We observe that there are totally 7 winning seller-buyer pairs, and the utilities for all of the winning pairs are positive. This means that the winning sellers are paid higher than their cost, and the winning buyers pay no more than their true valuation for the semantic information. Therefore, both buyers and sellers have incentives to participate in the auction. For the losing sellers and buyers, their utilities are zero. This shows that the property of individual rationality is achieved because all of the buyers and sellers are awarded with a non-negative utility. The price paid by winning sellers is equal to the payment received by the winning buyers. Thus, the budget balanced property is satisfied.

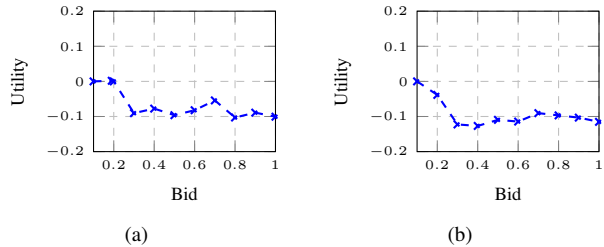
\begin{figure}[t]
\subfigure[]{
\centering
\pgfplotsset{scaled y ticks=false}
 \begin{tikzpicture}
\pgfplotsset{
    xmin=0.1, xmax=1,
    width=4cm, compat=1.17
}

\begin{axis}[
  label style={font=\scriptsize},
  tick label style={font=\tiny},
  yticklabel style={
        /pgf/number format/fixed,
        /pgf/number format/precision=2},
  ymin=-0.2, ymax=0.2,
  xlabel=Bid,
  ylabel=Utility,
  xtick={0.2,0.4,0.6,0.8,1.0},
  ytick={-0.2,-0.1,0,0.1,0.2},
  ymajorgrids=true,
  xmajorgrids=true,
  grid style=dashed,
  legend cell align={left},
  legend style={fill=white, nodes={scale=0.5, transform shape}},
  every axis plot/.append style={thick},
  /pgf/number format/.cd,
  1000 sep={}
]
\addplot[mark=x,blue,dashed]
  coordinates{
    (0.1,0)
    (0.19,0.00177)
    (0.2,0)
    (0.3,-0.09045307895236071)
    (0.4,-0.0775076057582384)
    (0.5,-0.09628250261836108)
    (0.6,-0.08214270135455187)
    (0.7,-0.05417460760169085)
    (0.8,-0.10288744231753405)
    (0.9,-0.08905493279986437)
    (1.0,-0.10054930111460741)
}; 
\end{axis}
\end{tikzpicture}
} 
\subfigure[]{
\centering
\pgfplotsset{scaled y ticks=false}
 \begin{tikzpicture}
\pgfplotsset{
    xmin=0.1, xmax=1,
    width=4cm, compat=1.17
}

\begin{axis}[
  label style={font=\scriptsize},
  tick label style={font=\tiny},
  yticklabel style={
        /pgf/number format/fixed,
        /pgf/number format/precision=2},
  ymin=-0.2, ymax=0.2,
  xlabel=Bid,
  ylabel=Utility,
  xtick={0.2,0.4,0.6,0.8,1.0},
  ytick={-0.2,-0.1,0,0.1,0.2},
  ymajorgrids=true,
  xmajorgrids=true,
  grid style=dashed,
  legend cell align={left},
  legend style={fill=white, nodes={scale=0.8, transform shape}},
  every axis plot/.append style={thick},
  /pgf/number format/.cd,
  1000 sep={}
]
\addplot[mark=x,blue,dashed]
  coordinates{
    (0.1,0)
    (0.2,-0.038536677140890735)
    (0.3,-0.12262935378044762)
    (0.4,-0.1268948707768313)
    (0.5,-0.10954130627602257)
    (0.6,-0.11443690397232689)
    (0.7,-0.09042409994095482)
    (0.8,-0.09685430862396874)
    (0.9,-0.10318235494583763)
    (1.0,-0.1152497027108065)
}; 

\end{axis}

\end{tikzpicture}
}
\caption{Utility when bidding untruthfully by (a) buyer $n$ that wins seller $m$ in the auction, $n \in \mathcal{N}_w$ (b) buyer $n$ that loses seller $m$ in the auction, $n \notin \mathcal{N}_w$.}
\label{fig:truthbuyer}
\end{figure}

The average utility of the winning buyers and sellers are presented in Fig. \ref{fig:totalutil}, which is obtained by averaging the values of 1000 samples. Intuitively, as the number of buyers increases, the sellers have more choices to achieve higher utilities. From Fig. \ref{fig:totalutil}, we observe that the auction mechanism helps to increase the average utility of the winning sellers as the number of buyers grows. Thus, our proposed mechanism can attract more sellers to participate in the information exchange with semantic communication systems.

To investigate the impact of DL in the double auction, we compare the average utilities of winning sellers with and without DL mechanism. The results without the DL mechanism (i.e., the baseline) are obtained by using the double auction mechanism proposed in \cite{jin2015auction}. It is shown in Fig. \ref{utildl} that the average utility of the winning sellers is higher when DL mechanism is adopted in the double auction. The reason is that the DL mechanism helps to maximize the revenue of the sellers.

As shown in Fig. \ref{utiltheta}, the average utility of the winning sellers is higher when the sellers set the payment price $\theta_m \geq 0.5$. The reason is that the sellers with higher $\theta_m$ obtain the semantic model which has a higher BLEU score and similarity score. Hence the buyers are willing to pay more to obtain more accurate information. This insight is verified in Fig. \ref{fig:bleusim}, in which we can see that the similarity score and the BLEU score are higher for sellers with $\theta_m \geq 0.5$. The higher similarity and BLEU scores motivate the buyers to submit higher bids to the sellers, which results in higher utilities as shown in Fig. \ref{utiltheta}. Furthermore, it is shown in Fig. \ref{fig:noseller} that there are more sellers with $\theta_m \geq 0.5$ from the winning sellers. In other words, the seller with higher $\theta_m$ has a higher chance to win the auction, regardless of the number of buyers.

To verify the truthfulness of the double auction, the sellers and buyers are randomly chosen to evaluate their utilities when their bid and ask are different from their true valuation. In Fig. \ref{fig:truthseller} (a), seller $m$ wins and gains the utility $u^s_m = 0.12$ when it asks truthfully with $a_m = C_m = 0.19$. It is shown that the utility cannot be improved by other values of ask. From Fig. \ref{fig:truthseller} (b), seller $m$ loses the auction with truthful ask $u^s_m = 0.18$ obtaining zero utility. It is shown that seller $m$ does not obtain a higher utility when asking untruthfully. In Fig. \ref{fig:truthbuyer} (a), buyer $n$ wins seller $m$ when it bids truthfully with $b^m_n = v^m_n = 0.19$ achieving a utility $u^b_{n,m}=0.0018$. There is no other higher utility achieved when it bids untruthfully. Fig. \ref{fig:truthbuyer} (b) shows the scenario when buyer $n$ does not win seller $m$ and achieve a non-positive utility when it bids untruthfully.

From the experiment results, we observe that the sellers that pay higher prices for the semantic models can achieve better similarity and BLEU scores in the double auction. It is shown that the sellers with better performance are more likely to win the auction and obtain higher utilities. Numerical results also show that the proposed double auction is incentive compatible, individually rational, and budget balanced.

\section{Conclusion And Future Directions}\label{sec:conclusion}
In this paper, we have proposed incentive mechanisms for both semantic model trading and semantic information trading. We developed the valuation functions for general semantic communications, and  performed a case study of the proposed auctions for semantic text transmission. To improve the system performance, we have proposed an effective feature reduction method to support devices with limited transmission resources. Simulation results show that the proposed method helps to increase significantly the utility of devices in the semantic information trading. Moreover, with the double auction mechanism, we have matched  the buyers and devices effectively. It is also shown that the revenue of the semantic model provider can be maximized while keeping the properties of incentive compatibility and individual rationality. For future research directions, we can consider the semantic-aware incentive mechanism design in non-text-based transmission such as wireless images and video transmission, and other semantic-based intelligent tasks.

For future works, considering that the raw data collected from different regions decays over time, we can count the age of information in the value functions of raw data. The difference in the age of information can also be taken into account in the evaluation of transmission accuracy. Moreover, we can consider that the semantic information, which is extracted from different types of raw data, e.g., text, image, and audio, has difference values.


\bibliographystyle{IEEEtran}
\bibliography{refs}

\end{document}